\newcommand{\multiline}[1]{%
  \begin{tabularx}{\dimexpr\linewidth-\ALG@thistlm}[t]{@{}X@{}}
    #1
  \end{tabularx}
}
\newtheorem{theorem}{Theorem}
\newtheorem{lemma}{Lemma}
\newtheorem{proposition}{Proposition}
\newtheorem{corollary}{Corollary}
\theoremstyle{definition}
\newtheorem{definition}{Definition}
\theoremstyle{remark}
\newtheorem{remark}{Remark}
\newtheorem{example}{Example}
\newcommand{\bfv}{\mathbf{v}}
\newcommand{\bfu}{\mathbf{u}}
\title{A new method for erasure decoding of convolutional codes}
\author{Julia Lieb, Raquel Pinto, Carlos Vela}
\date{}
\begin{document}

\maketitle
\begin{abstract}
 In this paper, we propose a new erasure decoding algorithm for convolutional codes using the generator matrix. This implies that our decoding method also applies to catastrophic convolutional codes in opposite to the classic approach using the parity-check matrix. We compare the performance of both decoding algorithms. Moreover, we enlarge the family of optimal convolutional codes (complete-MDP) based on the generator matrix.
\end{abstract}

\section{Introduction}\label{sec1}

An erasure channel is a communication channel where, after transmission, each  data symbol is received without error or known to be lost (``erased”). An important example of an erasure channel is the internet and therefore there are many applications over this channel, like video-streaming, that require reliable and fast decoding. In \cite{TRS:Decod} the authors showed that convolutional codes are specially suited for decoding over an erasure channel. Unlike block codes, which process data in fixed-size blocks, convolutional codes offer more flexibility in handling data streams, by selecting properly parts of the received information, called a ``window'', and correcting the erasures in the selected window via a parity-check matrix. Sliding this window along the received information sequence according to the distribution of the erasures it is possible to increase the number of corrected erasures.  
In \cite{TRS:Decod}, this way of erasure decoding was performed using the parity-check matrix of the code. In this paper, we present an erasure decoding algorithm using the generator matrix of a convolutional code. This algorithm can be applied to any convolutional code, also to convolutional codes that do not possess a parity-check matrix.

Maximum Distance Profile (MDP) convolutional codes have optimal recovering rate and were defined in \cite{MDS-Conv} for the class of non-catastrophic convolutional codes, i.e., convolutional codes which possess a parity-check matrix. A particular subclass of MDP convolutional codes are the complete MDP codes, defined in \cite{TRS:Decod}, which behave better than the general class of MDP convolutional codes when large bursts of erasures occur. In \cite{J:CompleteMDP} the author showed the existence of $(n,k,\delta)$ complete MDP convolutional codes for all code parameters $n,k, \delta$ with $(n-k) \mid \delta$ and the nonexistence of these codes if $(n-k) \nmid \delta$. In \cite{AL:j-MDPComplete} the construction of complete MDP convolutional codes over small fields was studied and the more general notion of complete $j$-MDP convolutional codes was defined. These codes perform optimally for decoding with time delay $j$, i.e., considering windows of size $n(j+1)$.

Complete and complete $j$-MDP convolutional codes were defined using parity-check matrices. In this paper, we generalize the notion of complete $j$-MDP convolutional codes to the more general class of delay-free convolutional codes proposing a new definition of $(n,k,\delta)$ complete $j$-MDP convolutional codes when $k \mid \delta$ using the generator matrices of the codes.

The paper is organized as follows. In Section \ref{preliminaries} we give preliminaries on convolutional codes, in particular on MDP and complete MDP codes. In Section \ref{sec_decoding} we show how to decode over the erasure channel using the generator matrix. In Section \ref{sec_CMDP}, for $k\mid\delta$, we define a novel notion of $(n,k,\delta)$ complete MDP and complete $j$-MDP convolutional codes using the generator matrix.
We discuss the decoding capabilities of these codes and we show that if $k \mid \delta$ and $(n-k) \mid \delta$, then the notion of complete $j$-MDP convolutional code defined in this paper coincides with the one defined in \cite{TRS:Decod,AL:j-MDPComplete}. Section \ref{sec_complexity} compares the performance of the decoding algorithm presented in this paper using the generator matrix and the decoding algorithm that makes use of the parity-check matrix introduced in \cite{TRS:Decod}.

\section{Preliminaries}\label{preliminaries}

Let $\mathbb F$ be a finite field and let $\mathbb F[z]$ denote the ring of polynomials with coefficients in $\mathbb F$. Let $\mathbb F[z]^s$ and $\mathbb F[z]^{s \times \ell}$ denote the $s$-dimensional row-vectors and the $s \times \ell$ matrices over $\mathbb F[z]$, respectively. We will also represent the elements of $\mathbb F[z]^s$ as elements of $\mathbb F^s[z]$ and the elements of $\mathbb F[z]^{s \times \ell}$ as elements of $\mathbb F^{s \times \ell}[z]$ depending on the problem we are dealing with.
In this section we will present the definitions and results on convolutional codes that will be needed throughout the paper. For more details see \cite{LPR-ConvCod}.

\begin{definition}
    An $(n,k)$ convolutional code $\cal C$ is an $\mathbb F[z]$-submodule of $\mathbb{F}[z]^n$ of rank $k$. A matrix $G(z) \in \mathbb F[z]^{k \times n}$ whose rows constitute a basis of $\cal C$ is called a generator matrix of $\cal C$, i.e.,
    $$
{\cal C}=
\{v(z) \in \mathbb{F}[z]^n \, : \, v(z)=u(z)G(z) \mbox{ for some } u(z) \in \mathbb{F}[z]^k \}.
    $$
   The vector $v(z)=u(z)G(z)$ is the codeword corresponding to the information sequence $u(z)$.
\end{definition}

A convolutional code $\cal C$ admits many generator matrices. If $G(z),\tilde G(z) \in \mathbb F[z]^{k \times n}$ are two generator matrices of $\cal C$, then
$$
\tilde G(z)=U(z)G(z),
$$
where $U(z) \in \mathbb F[z]^{k \times k}$ is a unimodular matrix. A matrix  $U(z) \in \mathbb F[z]^{k \times k}$ is unimodular if it has an inverse over $\mathbb F[z]$. Moreover, $U(z)$ is unimodular if and only if its determinant belongs to $\mathbb F\backslash \{0\}$ \cite{LPR-ConvCod}.

A matrix $G(z) \in \mathbb F[z]^{k \times n}$ is left prime if
\begin{equation} \label{equivencoders}
G(z)=X(z) \bar G(z),
\end{equation}
for some $X(z) \in \mathbb F[z]^{k \times k}$ and $\bar G(z) \in \mathbb F[z]^{k \times n}$ implies that $X(z)$ is unimodular. Since two generator matrices of a convolutional code $\cal C$ differ by left multiplication with a unimodular matrix, we have that if $\cal C$ admits a left prime generator matrix, then all its generator matrices are left prime and $\cal C$ is said to be non-catastrophic (otherwise, $\cal C$ is called catastrophic). Moreover, the full size minors of any two generator matrices of $\cal C$ differ by left multiplication with a nonzero constant and therefore the maximum degree of their full size minors is the same and it is called the degree of $\cal C$. An $(n,k)$ convolutional code of degree $\delta$ is said to be an $(n,k, \delta)$ convolutional code.

For $i=1,\hdots,k$, the $i$-th row degree of $G(z)\in\mathbb F[z]^{k\times n}$ is defined as the maximum degree of the entries of row $i$ of $G(z)$. Any $(n,k,\delta)$ convolutional code $\cal C$ admits a row reduced generator matrix, i.e. a generator matrix such that the sum of its row degrees equals $\delta$. Any two reduced generator matrices of $\cal C$ have the same row degrees, up to a permutation. Moreover, a matrix $G(z) \in \mathbb F[z]^{k \times n}$ is row reduced if and only if for all $u(z)=[u_1(z) \, u_2(z) \cdots u_k(z)] \in \mathbb F[z]^{k}$, the degree of $u(z)G(z)$ is equal to $max_{i: u_i(z) \neq 0}\{k_i + deg(u_i(z))\}$, where $k_i$ is the $i$-the row degree of $G(z)$, $i=1,2,\dots,k$ \cite{complementaryMinors}.

A full row rank matrix $H(z) \in \mathbb F[z]^{(n-k) \times n}$ is said to be a parity-check matrix of a convolutional code $\cal C$ if for all $v(z) \in \mathbb{F}[z]^n$
$$
v(z) \in {\cal C} \Leftrightarrow H(z)v(z)^T=0.
$$
We write $H(z)=\sum_{i=0}^{\nu}H_iz^i$ with $H_{\nu}\neq 0$ and define $\deg(H(z))=\nu$.
Not all convolutional codes admit a parity-check matrix as the following theorem shows.

\begin{theorem}\cite[Theorem 1.2.8]{LPR-ConvCod}
    An $(n,k)$ convolutional code $\cal C$ admits a parity-check matrix if and only $\cal C$ is non-catastrophic. 
\end{theorem}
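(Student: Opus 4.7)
The plan is to leverage the Smith normal form over $\mathbb{F}[z]$: any $G(z)\in\mathbb{F}[z]^{k\times n}$ of full row rank admits a factorization $G(z)=U(z)\,[\,D(z)\;\;0\,]\,V(z)$ with $U(z)\in\mathbb{F}[z]^{k\times k}$ and $V(z)\in\mathbb{F}[z]^{n\times n}$ unimodular and $D(z)$ a diagonal matrix. From the definition of left primeness given earlier, $G(z)$ is left prime if and only if the diagonal entries of $D(z)$ are all units; equivalently, $G(z)=U(z)\,[\,I_k\;\;0\,]\,V(z)$ for some unimodular $U(z),V(z)$, and equivalently, $G(z)$ admits a polynomial right inverse. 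I will use this characterization in both directions.

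For $(\Rightarrow)$, start from a left prime generator $G(z)=U(z)\,[\,I_k\;\;0\,]\,V(z)$ of $\mathcal{C}$, let $W_2(z)\in\mathbb{F}[z]^{n\times(n-k)}$ denote the last $n-k$ columns of $V(z)^{-1}$, and set $H(z):=W_2(z)^T$. Reading off $V(z)V(z)^{-1}=I_n$ immediately gives $G(z)H(z)^T=0$ and exhibits a polynomial right inverse of $H(z)$, namely the transpose of the last $n-k$ rows of $V(z)$, so $H(z)$ has full row rank. Conversely, if $H(z)v(z)^T=0$, expanding $v(z)=v(z)V(z)^{-1}V(z)$ shows that $v(z)$ is an $\mathbb{F}[z]$-combination of the first $k$ rows of $V(z)$; since those rows span the same $\mathbb{F}[z]$-module as the rows of $G(z)=U(z)\,[\,I_k\;\;0\,]\,V(z)$, we conclude $v(z)\in\mathcal{C}$.

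For $(\Leftarrow)$, let $H(z)$ be a parity-check matrix and $G(z)$ any generator matrix of $\mathcal{C}$. The Smith normal form provides a factorization $G(z)=X(z)\bar G(z)$ with $\bar G(z)\in\mathbb{F}[z]^{k\times n}$ left prime and $X(z)\in\mathbb{F}[z]^{k\times k}$. Since $G(z)$ has rank $k$, so does $X(z)$, hence $X(z)$ is invertible over the field of fractions $\mathbb{F}(z)$. From $H(z)G(z)^T=0$ one obtains $H(z)\bar G(z)^T X(z)^T=0$; cancelling $X(z)^T$ over $\mathbb{F}(z)$ yields $H(z)\bar G(z)^T=0$, so the rows of $\bar G(z)$ lie in $\mathcal{C}$. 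The reverse inclusion is immediate from $G(z)=X(z)\bar G(z)$. Hence $\bar G(z)$ is a left prime generator of $\mathcal{C}$, which is therefore non-catastrophic.

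The whole argument is essentially a careful use of the Smith normal form, so the main obstacle is not technical but organizational. The one subtle point is in $(\Leftarrow)$, where one must leave $\mathbb{F}[z]$ temporarily in order to justify cancelling $X(z)^T$ in $H(z)\bar G(z)^T X(z)^T=0$ over $\mathbb{F}(z)$, before returning to $\mathbb{F}[z]$ to identify the two codes module-theoretically.
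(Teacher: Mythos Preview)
The paper does not actually prove this theorem; it is quoted verbatim as \cite[Theorem 1.2.8]{LPR-ConvCod} and used as a black box, so there is no ``paper's own proof'' to compare against. Your argument is correct and is precisely the standard Smith-form proof one finds in the cited reference: extract a left prime $\bar G(z)=[I_k\ 0]V(z)$ from the Smith decomposition, build $H(z)$ from the complementary block of $V(z)^{-1}$ for $(\Rightarrow)$, and in $(\Leftarrow)$ use the parity-check identity $H(z)\bar G(z)^T=0$ (obtained after cancelling the nonsingular $X(z)^T$ over $\mathbb F(z)$) together with the defining property of $H(z)$ to force the row module of $\bar G(z)$ to coincide with $\mathcal C$. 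The only point worth tightening is that the equivalence ``left prime $\Leftrightarrow$ Smith diagonal is $I_k$ $\Leftrightarrow$ polynomial right inverse exists'' is asserted rather than derived from the paper's bare definition; it follows in two lines (factor $G=(UD)\cdot([I_k\ 0]V)$ for one direction, and for the other observe that a polynomial right inverse of $G$ forces any left factor $X(z)$ to have a polynomial right inverse, hence to be unimodular), so you may want to include that justification explicitly.
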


The support of a vector $w(z)=\displaystyle \sum_{i \in \mathbb N_0} w_i z^i \in \mathbb F[z]^{\ell}$ is defined as 
$$
{\rm supp}(w(z))=\{i \in \mathbb N_0\, : \, w_i \neq 0\}.
$$
A generator matrix $G(z) \in \mathbb F[z]^{k \times n}$ is said to be delay-free if for every $u(z) \in \mathbb{F}[z]^k$ and every $t \in \mathbb N_0$
$$
{\rm supp}(u(z)G(z)) \subset [t, + \infty [ \Rightarrow {\rm supp}(u(z)) \subset [t, + \infty [.
$$
This means that the supports of an information sequence and the corresponding codeword have the same minimum point, i.e., they ``start" at the same time.
A generator matrix $G(z)$ is delay-free if and only if $G(0)$ is full row rank. 
Note that if $G(0)$ is not full rank, then there exists $u_0\in\mathbb F_q^k$, $u_0\neq 0$, such that $v_0=u_0G(0)=0$. Hence, different $u_0$ can lead to the same $v_0$, which means that, even without erasures, $u_0$ cannot be recovered from $v_0$. Since for two generator matrices
$G(z),\tilde G(z) \in \mathbb F[z]^{k \times n}$ of the same code $\cal C$, there exists a  unimodular matrix $U(z) \in \mathbb F[z]^{k \times k}$ such that $\tilde G(z)=U(z)G(z)$,
we have that $\tilde G(0)=U(0)G(0)$. Therefore, $\tilde G(0)$ is full row rank if and only if $G(0)$ is full row rank since $U(0)$ is invertible.
 Thus, delay-freeness is an invariant of the code and we say that a convolutional code is delay-free if all its generator matrices are delay-free. Note that if $G(z)\in\mathbb{F}[z]^{k\times n}$ is left prime, then $G(0)$ is full row rank, which means that non-catastrophic convolutional codes are necessarily delay-free. 


The capability of error correction of a code is measured by its distance. For convolutional codes two important types of distances are defined: the free distance and the column distances. The weight of a vector $v(z)=\displaystyle \sum_{i \in \mathbb N_0} v_i z^i \in \mathbb{F}[z]^n$ is defined as
$$
{\rm wt}(v(z))= \sum_{i \in \mathbb N_0} {\rm wt}(v_i)
$$
where ${\rm wt}(v_i)$ is the number of nonzero entries of $v_i$. 

\begin{definition}
    Let $\cal C$ be an $(n,k)$ convolutional code. The free distance of $\cal C$ is
    $$
    d_{free}({\cal C})=min\{{\rm wt}(v(z)) \, : \, v(z) \in {\cal C}\backslash \{0\}\}.
    $$
\end{definition}

The next theorem establishes an upper bound on the free distance of a convolutional code called generalized Singleton bound.

\begin{theorem}\cite[Theorem 2.2]{Roxana99}
    Let $\cal C$ be an $(n,k,\delta)$ convolutional code. Then
    $$
    d_{free}({\cal C}) \leq (n-k)\left(\left\lfloor \frac{\delta}{k} \right\rfloor+1\right)   + \delta+1  
    $$
\end{theorem}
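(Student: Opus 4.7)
My plan is to deduce the bound by applying the classical Singleton bound for block codes to the $\mathbb{F}$-subspace of codewords of sufficiently small degree. The first step is to fix a row-reduced generator matrix $G(z) \in \mathbb{F}[z]^{k \times n}$ of $\mathcal{C}$, with row degrees $k_1 \leq k_2 \leq \cdots \leq k_k$ summing to $\delta$ (as guaranteed in the preliminaries), and to abbreviate $q := \lfloor \delta/k \rfloor$ and $r := \delta - kq \in \{0,1,\dots,k-1\}$.

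Next I would study the subspace $V_q := \{v(z) \in \mathcal{C} : \deg v(z) \leq q\}$. The row-reducedness criterion recalled in the preliminaries says that a codeword $u(z)G(z)$ satisfies $\deg(u(z)G(z)) \leq q$ precisely when $u_i(z) = 0$ whenever $k_i > q$ and $\deg u_i(z) \leq q - k_i$ otherwise; since $u(z) \mapsto u(z)G(z)$ is injective, this gives
\[
\dim_{\mathbb{F}} V_q \;=\; \sum_{i:\, k_i \leq q}(q - k_i + 1).
\]

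The heart of the argument, and the step I expect to be the main obstacle, is showing the lower bound $\dim V_q \geq k - r$. I would partition the indices as $S = \{i : k_i \leq q\}$ and $T = \{i : k_i > q\}$: each $i \in T$ contributes at least $q+1$ to $\sum_i k_i = \delta$, so $\sum_{i \in S} k_i \leq \delta - |T|(q+1)$. Plugging this into the formula above and using $|S|+|T|=k$ yields
\[
\dim V_q \;\geq\; |S|(q+1) - \bigl(\delta - |T|(q+1)\bigr) \;=\; k(q+1) - \delta \;=\; k - r \;\geq\; 1.
\]
This is the place where the floor function really enters, and it is what makes the argument work even for codes with unbalanced Forney indices (where some $k_i$ exceed $q$).

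Finally, since every element of $V_q$ has degree at most $q$, I can regard $V_q$ as a linear block code in $\mathbb{F}^{n(q+1)}$. The classical Singleton bound for block codes then furnishes a nonzero $v(z) \in V_q \subseteq \mathcal{C}$ with
\[
{\rm wt}(v(z)) \;\leq\; n(q+1) - \dim V_q + 1 \;\leq\; n(q+1) - (k-r) + 1 \;=\; (n-k)(q+1) + \delta + 1,
\]
which yields the claimed bound on $d_{free}(\mathcal{C})$.
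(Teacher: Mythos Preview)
Your argument is correct. Note, however, that the paper does not supply its own proof of this theorem: it simply cites \cite[Theorem~2.2]{Roxana99} and moves on. What you have written is essentially the original Rosenthal--Smarandache argument from that reference --- passing to the block code of codewords of degree at most $\lfloor\delta/k\rfloor$, bounding its dimension from below via the row degrees of a row-reduced generator matrix, and invoking the classical Singleton bound --- so there is nothing to compare against within the present paper, and your proof matches the standard one.
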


When we consider transmission over an erasure channel, column distances have an important role. In \cite{MDS-Conv} the authors restricted the definition of column distances to non-catastrophic convolutional codes but these notions can be easily generalized for the more general class of delay-free convolutional codes. 

Given a vector $v(z)=\displaystyle \sum_{i \in \mathbb N_0} v_i z^i \in \mathbb{F}[z]^n$ and $j_1,j_2 \in \mathbb N_0$ with $j_1 <j_2$, we define the truncation of $v(z)$ to the interval $[j_1,j_2]$, as $v_{[j_1,j_2]}(z)=v_{j_1}z^{j_1} + v_{j_1+1}z^{j_1+1} + \cdots + v_{j_2} z^{j_2}$.

\begin{definition}
    Let $\cal C$ be a delay-free $(n,k,\delta)$  convolutional code. The $j$-column distance of $\cal C$ is defined as 
    $$
d^c_j  =  min\{{\rm wt}(v_{[0,j]}(z)) \, : \; v(z) \in {\cal C} \mbox{ with } v_0 \neq 0\}.
    $$
\end{definition}

The next inequalities follow immediately:
$$
d^c_0 \leq d^c_1 \leq \cdots \leq \lim_{j \rightarrow\infty} d^c_j\leq d_{free}({\cal C})
$$

The following proposition describes how to calculate column distances with the help of the parity-check matrix of a non-catastrophic convolutional code. Given a parity-check matrix $H(z)=\displaystyle \sum_{i \in \mathbb N_0} H_i z^i \in \mathbb F[z]^{(n-k) \times n}$ we define the sliding parity-check matrix
\begin{equation}\label{Hj}
H_j^c=\left[\begin{array}{cccc}
     H_0 &  &  & \\
      H_1 & H_0 &  &  \\
      \vdots & \vdots & \ddots &  \\
     H_j & H_{j-1} & \cdots & H_0
\end{array}
\right],
\end{equation}
for $j \in \mathbb N_0$ where $H_i=0$ for $i>\nu$.

\begin{proposition}\cite[Proposition 2.1]{MDS-Conv}\label{prop1}
Let $j \in \mathbb N_0$, $\cal C$ be a non-catastrophic convolutional code,  $H(z)=\displaystyle \sum_{i \in \mathbb N_0} H_i z^i \in \mathbb F[z]^{(n-k) \times n}$ a parity-check matrix of $\cal C$ and $H_j^c$ the corresponding sliding parity-check matrix. For $d\in\mathbb{N}$, the following statements are equivalent:
\begin{enumerate}
    \item $d_j^c=d$.
    \item None of the first $n$ columns of $H_j^c$ is contained in the span of any other $d-2$ columns of this matrix and one of the first $n$ columns
of $H_j^c$ is in the span of some other $d-1$ columns of that
matrix.
\end{enumerate}
\end{proposition}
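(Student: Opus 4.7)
The plan is to set up a bijection between $(j+1)$-truncations $v_{[0,j]}(z)$ of codewords with $v_0 \neq 0$ and vectors in $\ker H_j^c$ whose first $n$-coordinate block is nonzero, and then reinterpret the minimum-weight condition as a column-dependence condition on $H_j^c$.

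First I would handle the easy direction. For any $v(z)=\sum_i v_i z^i \in \mathcal{C}$, reading off the coefficients of $z^0,z^1,\dots,z^j$ in $H(z)v(z)^T=0$ gives exactly $H_j^c \bar v = 0$, where $\bar v = [v_0^T \mid v_1^T \mid \cdots \mid v_j^T]^T \in \mathbb F^{n(j+1)}$; moreover $\mathrm{wt}(v_{[0,j]}(z)) = \mathrm{wt}(\bar v)$, and $v_0\neq 0$ corresponds to the first $n$-block of $\bar v$ being nonzero. This shows $d_j^c$ is at least the minimum weight of such a $\bar v$.

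For the reverse containment, given $\bar v \in \ker H_j^c$ with $v_0 \neq 0$, I need to produce a codeword whose first $j+1$ coefficients are $v_0,\dots,v_j$. Set $f(z)=v_0+v_1z+\cdots+v_jz^j$. The kernel condition makes $H(z)f(z)^T$ divisible by $z^{j+1}$, say $H(z)f(z)^T=z^{j+1}g(z)^T$ for some polynomial $g(z)$. Because $\mathcal{C}$ is non-catastrophic, one may take the parity-check matrix $H(z)$ to be left prime, and hence there exists $M(z)\in\mathbb F[z]^{n\times(n-k)}$ with $H(z)M(z)=I_{n-k}$. Then $v(z):=f(z)-z^{j+1}(M(z)g(z)^T)^T$ lies in $\mathcal{C}$ and shares the first $j+1$ coefficients with $f(z)$, completing the bijection.

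With the bijection in hand, I translate weights into column dependencies: a weight-$w$ vector in $\ker H_j^c$ is exactly a linear dependence among $w$ columns of $H_j^c$, and the requirement $v_0 \neq 0$ forces at least one of the first $n$ columns to participate. Therefore $d_j^c=d$ is equivalent to the conjunction of (i) every dependence involving a first-$n$ column uses at least $d$ columns, i.e.\ no first-$n$ column lies in the span of any $d-2$ other columns; and (ii) some dependence uses exactly $d$ columns including a first-$n$ column, i.e.\ some first-$n$ column lies in the span of $d-1$ other columns. This is precisely the statement of the proposition.

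The main obstacle is the extension step: polynomial left primeness of $H(z)$ is essential to realize every kernel vector of $H_j^c$ as the truncation of a \emph{polynomial} codeword. This is exactly why the proposition is stated for non-catastrophic codes, and it foreshadows why, when the code is only delay-free, one is forced to redefine column distances and to work with the generator matrix instead, as pursued later in the paper.
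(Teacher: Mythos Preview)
The paper does not supply a proof of this proposition; it is simply quoted from \cite{MDS-Conv}. Your argument is correct and is essentially the standard one: identify the set of truncations $(v_0,\dots,v_j)$ of codewords with $v_0\neq 0$ with the set of vectors in $\ker H_j^c$ having a nonzero first $n$-block, and then read off the weight condition as a column-dependence condition. The only step requiring care is the extension step, and your use of a polynomial right inverse of a left prime $H(z)$ is exactly the right tool; this is also where non-catastrophicity enters, as you note. One small wording remark: strictly speaking you do not need a \emph{bijection}, only that the two sets of vectors in $\mathbb F^{n(j+1)}$ coincide, which is precisely what your two inclusions establish.
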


Next we show an upper bound for the column distances of a convolutional code. Although this result was proved in \cite[Proposition 2.2, Corollary 2.3]{MDS-Conv} for non-catastrophic convolutional codes, it could be extended to delay-free convolutional codes.
For a generator matrix $G(z)=\displaystyle \sum_{i \in \mathbb N_0} G_i z^i=\sum_{i=0}^{\mu}G_iz^i\in \mathbb F[z]^{k \times n}$ with $G_{\mu}\neq 0$ and $G_i=0$ for $i>\mu$ of a delay-free $(n,k,\delta)$ convolutional code and for $j \in \mathbb N_0$, define the sliding generator matrix as
\begin{equation}\label{Gj}
G_j^c=\left[\begin{array}{cccc}
     G_0 & G_1 & \cdots & G_j \\
      & G_0 & \cdots & G_{j-1} \\
      & & \ddots & \vdots \\
      & & & G_0
\end{array}
\right].
\end{equation}

\begin{theorem}\label{Th:SingletonColumn}
Let $\cal C$ be a delay-free $(n,k,\delta)$ convolutional code. For $j \in \mathbb N_0$,
$$
d^c_j \leq (n-k)(j+1)+1.
$$   
Moreover, if $d^c_j = (n-k)(j+1)+1$ for some $j \in \mathbb N_0$, then $d^c_i = (n-k)(i+1)+1$, for $i \leq j$. 
\end{theorem}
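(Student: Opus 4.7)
The plan is to work directly with the generator matrix, which is available for any delay-free code even when a parity-check matrix does not exist. Fix a generator matrix $G(z)=\sum_{i=0}^{\mu}G_iz^i$ of ${\cal C}$, and let ${\cal B}$ denote the row space of $G_0$; since ${\cal C}$ is delay-free, $G_0$ has full row rank, so ${\cal B}$ is a linear $[n,k]$ block code over $\mathbb F$. Two standard facts about ${\cal B}$ will be crucial: by the (block) Singleton bound, ${\cal B}$ contains a nonzero element of weight at most $n-k+1$; and by the redundancy (covering-radius) bound, every coset of ${\cal B}$ in $\mathbb F^n$ contains a vector of weight at most $n-k$ (the latter because any syndrome can be written as a combination of $n-k$ linearly independent columns of a parity-check matrix of the $[n,k]$ block code ${\cal B}$).

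For the upper bound, the plan is to build a codeword $v(z) = u(z)G(z)\in{\cal C}$ with $v_0 \neq 0$ and ${\rm wt}(v_{[0,j]}) \leq (n-k)(j+1)+1$ by choosing $u_0,u_1,\ldots,u_j \in \mathbb F^k$ greedily, using the causal identity $v_\ell = \sum_{m=0}^{\ell} u_m G_{\ell-m}$. First I pick $u_0 \neq 0$ so that $v_0 = u_0 G_0$ is a minimum-weight nonzero element of ${\cal B}$, giving ${\rm wt}(v_0) \leq n-k+1$. Then, for $\ell = 1,\ldots,j$ in order, once $u_0,\ldots,u_{\ell-1}$ are fixed, the vector $v_\ell$ ranges over the coset $\left(\sum_{m=0}^{\ell-1} u_m G_{\ell-m}\right)+{\cal B}$ as $u_\ell$ varies over $\mathbb F^k$, so I select $u_\ell$ achieving the coset-minimum weight, which is at most $n-k$. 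Summing over $\ell$ yields ${\rm wt}(v_{[0,j]}) \leq (n-k+1)+j(n-k) = (n-k)(j+1)+1$, as required.

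For the second statement, I argue by contrapositive: if $d_i^c \leq (n-k)(i+1)$ for some $i \leq j$, I will show $d_j^c \leq (n-k)(j+1)$, which contradicts $d_j^c=(n-k)(j+1)+1$. Starting from a witness codeword $\hat v(z)=\hat u(z)G(z)$ with $\hat v_0 \neq 0$ and ${\rm wt}(\hat v_{[0,i]}) \leq (n-k)(i+1)$, I define $u'(z)$ by keeping $\hat u_0,\ldots,\hat u_i$ as its first $i+1$ coefficients and choosing $u'_{i+1},\ldots,u'_j$ via the same greedy coset-minimization rule as above. Because $v'_\ell$ depends only on $u'_0,\ldots,u'_\ell$, we have $v'_\ell = \hat v_\ell$ for $\ell \leq i$ (so in particular $v'_0\neq 0$), while ${\rm wt}(v'_\ell)\leq n-k$ for $i<\ell\leq j$. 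Therefore ${\rm wt}(v'_{[0,j]}) \leq (n-k)(i+1)+(j-i)(n-k) = (n-k)(j+1)$, delivering the contradiction.

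The only non-routine ingredient is the covering-radius bound $\rho({\cal B}) \leq n-k$ for the $[n,k]$ block code generated by $G_0$; once that is in hand, both claims follow from a direct greedy construction that exploits the block-lower-triangular structure of $G_j^c$ and the freedom provided by the full-rank block $G_0$ in each new row strip.
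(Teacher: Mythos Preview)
Your proof is correct and follows essentially the same route as the paper's: a greedy coefficient-by-coefficient construction of $(u_0,\ldots,u_j)$ that exploits the full row rank of $G_0$ to bound ${\rm wt}(v_0)\le n-k+1$ and ${\rm wt}(v_\ell)\le n-k$ for $\ell\ge 1$, followed by the same extension argument for the second claim. The only cosmetic difference is that the paper fixes an explicit information set of $G_0$ (assuming WLOG its first $k$ columns are invertible) and cancels those coordinates directly, whereas you package the same step as the covering-radius bound $\rho({\cal B})\le n-k$ for the block code ${\cal B}=\operatorname{rowspace}(G_0)$.
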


\begin{proof}
We show the first statement of the theorem by induction with respect to $j\in\mathbb N_0$.
    Let $G(z)$ be a generator matrix of $\mathcal{C}$. Since $\mathcal{C}$ is delay-free, $G(0)=G_0$ is full-row rank. Let us assume without loss of generality that the minor constituted by the first $k$ columns of $G_0$ is different from zero and therefore there exists $u_0\in\mathbb{F}^k\setminus\{0\}$ such that $u_0G_0=[0,\dots,0,\ast\,\ast,\dots,\ast]$ has the first $k-1$ entries equal to zero, i.e., ${\rm wt}(u_0G_0)\leq n-k+1$ and therefore $d_0^c\leq n-k+1$ (note that this is a proof for the Singleton bound for $(n,k)$ block codes). Let us assume that $d_j^c\leq(n-k)(j+1)+1$ and let $[u_0,u_1,\dots,u_j]$, $u_i\in\mathbb{F}^k$ with $u_0\not=0$, such that ${\rm wt}([u_0,u_1,\dots,u_j]G_j^c)\leq (n-k)(j+1)+1$. Let 
    $$
    [u_0,u_1,\dots,u_j]\begin{bmatrix}
        G_{j+1}\\G_{j}\\\vdots\\G_1
    \end{bmatrix}=[w_1\,w_2]
    $$
    with $w_1\in\mathbb{F}^k$ and $w_2\in\mathbb{F}^{n-k}$. There exists $u_{j+1}$ such that $u_{j+1}G_0=[-w_1\,\bar{w_2}]$ for some $\bar{w_2}\in\mathbb{F}^{n-k}$, and therefore ${\rm wt}([u_0,u_1,\dots,u_{j+1}]G_{j+1}^c)\leq (n-k)(j+2)+1$. 

    Moreover, if $d^c_j < (n-k)(j+1)+1$, then there exists $[u_0,u_1,\dots,u_j]$, $u_i\in\mathbb{F}^k$ with $u_0\not=0$, such that ${\rm wt}([u_0,u_1,\dots,u_j]G_j^c)< (n-k)(j+1)+1$. In the same way as above, we find $u_{j+1}\in\mathbb F^k$ such that $([u_0,u_1,\dots,u_{j+1}]G_{j+1}^c)< (n-k)(j+2)+1$. This shows the second claim of the theorem.
\end{proof}

Since none of the column distances can be larger than the free distance of $\cal C$, there exists an integer
$$
L=\left\lfloor \frac{\delta}{k} \right\rfloor + \left\lfloor \frac{\delta}{n-k}  \right\rfloor
$$
for which $d^c_j$ can be equal to $(n-k)(j+1)+1$ for $j \leq L$ and $d^c_j < (n-k)(j+1)+1$, for $j > L$. A delay-free $(n,k,\delta)$ convolutional code with $d^c_L = (n-k)(L+1)+1$ is called a Maximum Distance Profile (MDP) code. These codes can be characterized in terms of their generator matrices.

\begin{theorem}\cite[Theorem 2.4]{MDS-Conv}\label{Th:SingBoundcolumnGenerator} Let $\cal C$ be a delay-free $(n,k,\delta)$ convolutional code with generator matrix $G(z)$ and let $G_j^c$ be as defined in (\ref{Gj}). Then the following statements are equivalent:
\begin{enumerate}
\item $d^c_j = (n-k)(j+1)+1$,
\item every $(j+1)k \times (j+1)k$ full-size minor of $G_j^c$ formed
from the columns with indices $ 1 \leq t_1 < \cdots < t_{(j+1)k}$, where $t_{sk+1} > sn$, for $s=1,2, \dots, j$, is nonzero.
\end{enumerate}
\end{theorem}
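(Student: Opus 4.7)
The plan is to prove the equivalence by translating between the singularity of a full-size minor of $G_j^c$ at a prescribed column pattern and the existence of a codeword with $v_0\ne 0$ of small weight in $v_{[0,j]}$. If $u=[u_0,\dots,u_j]$ and the restriction of $uG_j^c$ to columns $t_1,\dots,t_{(j+1)k}$ vanishes, then those columns are linearly dependent exactly when $u\ne 0$, and conversely a singular such minor yields such a $u$. A vector of length $(j+1)n$ of weight at most $(n-k)(j+1)$ has at least $(j+1)k$ zero entries, and the constraint $t_{sk+1}>sn$ is exactly the condition that at most $sk$ of the selected indices lie in $\{1,\dots,sn\}$. The whole proof then hinges on relating these two facts, together with the combinatorial question of extracting $(j+1)k$ zero positions of $uG_j^c$ with the required spread.

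For the direction statement $2\Rightarrow $ statement $1$, I would induct on $j$. The base $j=0$ reduces to the assertion that every $k\times k$ minor of $G_0$ is nonzero, which is the Singleton bound for the block code generated by $G_0$ and gives $d^c_0=n-k+1$. For the inductive step, first observe that the level-$j$ minor condition implies the level-$(j-1)$ one: given a valid $jk$-tuple $t_1<\dots<t_{jk}$ in $\{1,\dots,jn\}$, append any $k$ columns from $\{jn+1,\dots,(j+1)n\}$. The resulting $(j+1)k\times(j+1)k$ submatrix of $G_j^c$ is block upper triangular of the form $\left(\begin{smallmatrix}A & B\\ 0 & D\end{smallmatrix}\right)$, where $A$ is the desired $jk\times jk$ minor in $G_{j-1}^c$ and $D$ is a $k\times k$ submatrix of $G_0$, the lower-left block being zero because rows $jk+1,\dots,(j+1)k$ of $G_j^c$ vanish on the first $jn$ columns. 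Hence $\det A\cdot\det D\ne 0$ and in particular $\det A\ne 0$. By the inductive hypothesis and the moreover part of Theorem~\ref{Th:SingletonColumn} we get $d^c_s=(n-k)(s+1)+1$ for all $s\le j-1$, which translates into $z_s\le sk-1$ for $s=1,\dots,j$, where $z_s$ denotes the number of zeros of $uG_j^c$ in the first $sn$ entries. If ${\rm wt}(uG_j^c)\le(n-k)(j+1)$ for some $u$ with $u_0\ne 0$, then take all zeros in blocks $1,\dots,j$ and fill the remaining slots from block $j+1$; since $z_s\le sk-1<sk$ this greedy choice produces a valid $(j+1)k$-tuple of zero positions, contradicting the level-$j$ minor condition.

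For the converse direction, assume $d^c_j=(n-k)(j+1)+1$ and suppose some valid minor of $G_j^c$ vanishes. Choose a corresponding $u=[u_0,\dots,u_j]\ne 0$ with $uG_j^c$ zero on the selected columns and set $r=\min\{s:u_s\ne 0\}$. If $r=0$, delay-freeness gives $v_0=u_0G_0\ne 0$, and ${\rm wt}(uG_j^c)\le(n-k)(j+1)$ contradicts the maximality of $d^c_j$. If $r\ge 1$, consider the shifted codeword $v'$ with $v'_s=v_{s+r}$ generated by $u'_s=u_{s+r}$; delay-freeness gives $v'_0=u_rG_0\ne 0$. The constraint $t_{rk+1}>rn$ forces at least $(j+1-r)k$ of the selected positions to lie in entries $rn+1,\dots,(j+1)n$, which correspond to zeros of $v'_{[0,j-r]}$. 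Hence ${\rm wt}(v'_{[0,j-r]})\le(n-k)(j-r+1)$, contradicting $d^c_{j-r}=(n-k)(j-r+1)+1$, the latter holding by the moreover part of Theorem~\ref{Th:SingletonColumn}.

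The main obstacle will be the bookkeeping in the inductive step of the forward direction: verifying that the block-triangular extension really produces a valid $(j+1)k$-tuple for every valid $jk$-tuple, and confirming that the inductively obtained bounds $z_s\le sk-1$ are enough to run the greedy procedure without getting stuck in the final block. The shift trick in the converse direction is where delay-freeness is essential, since without it the case $u_0=0$ cannot be reduced to a genuine codeword starting at time $r$ with $v'_0\ne 0$. Everything else is routine once these two structural observations are in place.
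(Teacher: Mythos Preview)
The paper does not give a proof of this theorem; it is simply quoted from \cite{MDS-Conv}, so there is no in-paper argument to compare against. That said, your proof is correct and is essentially the standard route.

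For $2\Rightarrow 1$, the block-triangular extension does give a valid $(j+1)k$-tuple: the only new constraint is $t_{jk+1}>jn$, and the appended columns lie in block $j+1$. The bounds $z_s\le sk-1$ follow because the first $sn$ entries of $uG_j^c$ equal $[u_0,\dots,u_{s-1}]G_{s-1}^c$, which has weight at least $d^c_{s-1}=(n-k)s+1$ by the inductive hypothesis. The greedy choice then works: you pick the $z_j\le jk-1$ zeros from the first $jn$ positions and $(j+1)k-z_j$ zeros from block $j+1$; the weight assumption guarantees block $j+1$ contains at least that many zeros, and for each $s\le j$ only $z_s\le sk$ of the chosen indices lie in $\{1,\dots,sn\}$.

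For $1\Rightarrow 2$, the shift to $u'=[u_r,\dots,u_j]$ is exactly where delay-freeness is used, and the count is right: $t_{rk+1}>rn$ forces at least $(j+1-r)k$ of the selected (zero) positions to land in entries $rn+1,\dots,(j+1)n$, which coincide with the entries of $u'G_{j-r}^c$. Hence ${\rm wt}(u'G_{j-r}^c)\le (n-k)(j+1-r)$, contradicting $d^c_{j-r}=(n-k)(j-r+1)+1$ from Theorem~\ref{Th:SingletonColumn}.
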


Non-catastrophic MDP convolutional codes can also be characterized in terms of their parity-check matrices. 

\begin{theorem}\cite[Theorem 2.4]{MDS-Conv} \label{MDS_H} Let $\cal C$ be a non-catastrophic $(n,k,\delta)$ convolutional code with parity-check matrix $H(z)$ and let $H_j^c$ be as defined as in (\ref{Hj}). Then the following statements are equivalent:
\begin{enumerate}
\item $d^c_j = (n-k)(j+1)+1$,
\item every $(j+1)(n-k) \times (j+1)(n-k)$ full-size minor of $H_j^c$ formed from the columns with indices $ 1 \leq r_1 < \cdots < r_{(j+1)(n-k)}$, where $r_{s(n-k)} \leq sn$, for $s=1,2, \dots, j$, is nonzero.
\end{enumerate}
\end{theorem}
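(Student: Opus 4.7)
The plan is to reduce both implications to a single structural statement about $H_j^c$: every set of $M := (n-k)(j+1)$ columns of $H_j^c$ containing at least one of the first $n$ columns is linearly independent. Call this property $(\ast)$. Combining Proposition \ref{prop1} with the Singleton bound $d_j^c \leq (n-k)(j+1)+1$ from Theorem \ref{Th:SingletonColumn} (which applies since non-catastrophic codes are delay-free), statement 1 is equivalent to $(\ast)$: $(\ast)$ says precisely that no first-$n$ column of $H_j^c$ lies in the span of any other $M-1$ columns, so by Proposition \ref{prop1} it forces $d_j^c \geq M+1$, and the Singleton bound pins this down to equality.

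For the direction $1 \Rightarrow 2$, I fix column indices $r_1 < \cdots < r_M$ with $r_{s(n-k)} \leq sn$ for $s = 1,\ldots,j$. The case $s=1$ gives $r_1 \leq r_{n-k} \leq n$, so the chosen columns include at least one of the first $n$ columns. Property $(\ast)$ then forces those columns to be linearly independent, and the $M \times M$ minor is nonzero.

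The direction $2 \Rightarrow 1$ I plan to prove by contraposition, which is where the main obstacle lies. Assuming $(\ast)$ fails, I pick $v = (v_0,\ldots,v_j) \in \ker H_j^c$ with $v_0 \neq 0$ and minimal Hamming weight, and denote the support by $S \subseteq \{1,\ldots,(j+1)n\}$, so $|S| \leq M$. The key claim is that for each $s = 1,\ldots,j$, the set $S$ meets the late-block columns $\{sn+1,\ldots,(j+1)n\}$ in at most $(j+1-s)(n-k)$ positions. The argument exploits the block-triangular structure of $H_j^c$: the late-block columns have their nonzero entries confined to rows $s(n-k)+1,\ldots,(j+1)(n-k)$, a space of dimension only $(j+1-s)(n-k)$. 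Hence if $S$ met those columns in more than $(j+1-s)(n-k)$ positions, the corresponding columns would be linearly dependent among themselves; extracting a nontrivial relation produces $\tilde v \in \ker H_j^c$ with $\mathrm{supp}(\tilde v)\subseteq S$ and $\tilde v_t = 0$ for $t < s$. Choosing $c\in\mathbb F$ so that $v-c\tilde v$ kills one chosen late-block entry of $v$ gives a null vector of strictly smaller weight whose first-$n$ block is still $v_0 \neq 0$, contradicting the minimality of $v$.

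With this late-block bound secured, I enlarge $S$ to a set $S'$ of size exactly $M$ by adjoining further columns chosen so that $|S' \cap \{1,\ldots,sn\}| \geq s(n-k)$ for every $s = 1,\ldots,j$. The bound on $|S\cap\{sn+1,\ldots,(j+1)n\}|$ translates into exactly enough room in the early blocks to meet the required prefix counts without exceeding block capacities, so a straightforward greedy allocation succeeds. Since $S'\supseteq S$ and the columns of $H_j^c$ indexed by $S$ are linearly dependent, so are those indexed by $S'$. This gives a vanishing $M\times M$ minor of $H_j^c$ whose column indices satisfy the hypothesis of 2, contradicting 2 and completing the proof.
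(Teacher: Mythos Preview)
The paper does not supply a proof of this theorem---it is quoted from \cite{MDS-Conv}---so there is no argument in the paper to compare against. Your proposal, however, has a genuine gap in the direction $1\Rightarrow 2$. The property $(\ast)$ is strictly stronger than statement~1 and in general fails for \emph{every} code: for example with $n=3$, $k=1$, $j=1$ (so $M=4$), columns $4,5,6$ of $H_1^c$ vanish in the first $n-k=2$ rows and hence lie in a $2$-dimensional subspace, so $\{1,4,5,6\}$ is always a dependent $M$-set containing a first-block column, even when $d_1^c=5$. What Proposition~\ref{prop1} together with the Singleton bound actually yields is the weaker property $(\ast')$: no column among the first $n$ lies in the span of any other $M-1$ columns. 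This alone does not force a column set satisfying the hypothesis of~2 to be independent, since a dependence among those $M$ columns need not involve any first-block column with nonzero coefficient.

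The repair is to use the constraint $r_{s(n-k)}\le sn$ for every $s$, not only $s=1$. If the columns $r_1,\ldots,r_M$ were dependent, take a nonzero $v=(v_0,\ldots,v_j)\in\ker H_j^c$ supported on them and let $s$ be minimal with $v_s\neq 0$. If $s=0$ you contradict $(\ast')$ directly. If $s>0$, the block-triangular structure gives $(v_s,\ldots,v_j)\in\ker H_{j-s}^c$ with support of size at most $M-s(n-k)=(j{+}1{-}s)(n-k)$ (here the constraint $r_{s(n-k)}\le sn$ is essential), contradicting $(\ast')$ for $H_{j-s}^c$ since $d_{j-s}^c=(j{+}1{-}s)(n-k)+1$ by Theorem~\ref{Th:SingletonColumn}. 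Your argument for $2\Rightarrow 1$, on the other hand, is essentially correct once $(\ast)$ is read as $(\ast')$: the minimal-weight $v$ with $v_0\neq 0$ exists precisely because $(\ast')$ fails, and your late-block bound and greedy extension to $S'$ go through.
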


In \cite{TRS:Decod} the authors propose a decoding method for non-catastrophic convolutional codes over an erasure channel using a parity-check matrix $H(z)=\sum_{i=0}^{\nu}H_i z^i$ of degree $\nu$ of the code. Let us assume that $v(z)=\sum_{i \in \mathbb N_0} v_i z^i$ is the received codeword and that the coefficients $v_i$ are correct for $i<t$ and that $v_t,\dots v_{t+j}$ may have erasures.  
Then,
\begin{equation}\label{eq:SolveH}
    \left(\begin{array}{cccccc}
     H_\nu   & \dots &   H_0 &      &        &\\
            &  H_\nu   & \dots &   H_0 &        &\\
            &       & \ddots &       & \ddots &\\
            &       &        &  H_\nu   & \dots &   H_0\\
    \end{array}\right)\left(\begin{array}{c}
          v_{t-\nu} ^T\\
          v_{t-\nu+1}^T\\
          \vdots\\
          v_{t}^T
          \\
          \vdots
          \\
          v_{t+j}^T
    \end{array}\right)=\mathbf{0}.
\end{equation}
Let $E$ be the set of indices of the vector $$\bar v=\left[\begin{array}{cccccc}
          v_{t-\nu}  & 
          v_{t-\nu+1} & 
          \cdots &
          v_{t}
        & \cdots & 
          v_{t+j}
    \end{array}\right]$$ where there are erasures and let us represent by $\bar v^{e}$ the vector obtained from $\bar v$ by keeping the entries with indices in $E$ and by $\bar v^{r}$ the vector obtained from $\bar v$ by deleting the entries with indices in $E$. Then the matrix equation (\ref{eq:SolveH}) is equivalent to the system of linear equations in the unknowns $\bar v^{e}$,
   \begin{equation}\label{eq_sys}
    {\cal H}^{e} (\bar v^{e})^T = - {\cal H}^{r} (\bar v^{r})^T,
    \end{equation}
    where ${\cal H}^{e}$ and $ {\cal H}^{r} $ are suitable matrices. The linear system (\ref{eq_sys}) has a solution since only erasures can occur during transmission, but the erasures are recoverable if and only if the solution is unique, i.e., if and only if the matrix ${\cal H}^{e}$ is full column rank.

\begin{theorem}\cite{TRS:Decod}\label{th:ParityDecoding}
    Let $\mathcal{C}$ be an $(n,k,\delta)$ non-catastrophic convolutional code with $d^c_{j}$ the $j$-th column distance. If in any sliding window $v_i,v_{i+1}, \cdots, v_{i+j-1}$ of length $(j+1)n$ at most $d^c_{j}-1$ erasures occur, then we can completely recover the transmitted sequence $v(z)=\sum_{i \in \mathbb N_0}v_i z^i$ by using a parity-check matrix of $\cal C$.
\end{theorem}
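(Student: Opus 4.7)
My plan is to prove this by induction on the block index and to run a sliding-window decoder from left to right. I assume inductively that all erasures in the blocks preceding some index $t$ have already been recovered, and then show that the erasures in block $v_t$ can be uniquely determined from the parity-check equations indexed by $v_t,\dots,v_{t+j}$; after that the window is shifted by one position and the same argument is reapplied. The base case corresponds to the first window that contains an erasure, and is handled in exactly the same way, since blocks lying before it are received correctly.

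For the inductive step, I would write the $(j+1)(n-k)$ parity-check equations involving $v_{t-\nu},\dots,v_{t+j}$ in the form (\ref{eq_sys}). All non-erased entries, including the entire blocks $v_{t-\nu},\dots,v_{t-1}$ by the induction hypothesis, are moved to the right-hand side, and the remaining coefficient matrix $\mathcal{H}^e$ is the column submatrix of the sliding parity-check matrix $H_j^c$ in (\ref{Hj}) indexed by the erased positions within the window. By hypothesis, $\mathcal{H}^e$ has at most $d^c_j-1$ columns, and since $v_t$ is erased in at least one coordinate (otherwise there is nothing to decode at step $t$), at least one of these columns lies among the first $n$ columns of $H_j^c$.

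I would now invoke Proposition \ref{prop1} with $d=d^c_j$: no column among the first $n$ columns of $H_j^c$ is contained in the span of any other $d^c_j-2$ columns. If the erasures in $v_t$ were not uniquely determined, there would exist two distinct solutions $x,x'$ of $\mathcal{H}^e x=b$ disagreeing on some coordinate $i$ that corresponds to a first-$n$ column $c_i$ of $H_j^c$. Then $\mathcal{H}^e(x-x')=0$ with $(x-x')_i\neq 0$ would express $c_i$ as a linear combination of the other at most $d^c_j-2$ columns of $\mathcal{H}^e$, each of which is a column of $H_j^c$, contradicting Proposition \ref{prop1}. Therefore every erased entry of $v_t$ is recovered uniquely, the window slides by one block, and the procedure terminates after finitely many steps.

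The step I expect to be the main obstacle is the bookkeeping that identifies $\mathcal{H}^e$ with a column submatrix of $H_j^c$ in such a way that the columns attached to $v_t$ become exactly the ``first $n$ columns'' required by Proposition \ref{prop1}. Concretely, one has to check that the equations written in the Toeplitz form of (\ref{eq:SolveH}) can be rearranged, modulo permutations of rows and columns, into the lower-triangular form of (\ref{Hj}) in which the block $v_t$ indexes the first $n$ columns. Once this identification is in place, the kernel argument above is immediate, and the sliding induction carries the rest.
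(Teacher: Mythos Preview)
The paper does not give its own proof of this theorem; it is quoted from \cite{TRS:Decod}, with only the decoding setup sketched in (\ref{eq:SolveH})--(\ref{eq_sys}). Your argument is correct and is exactly the standard one, and it also mirrors the paper's proof of the generator-matrix analogue, Theorem~\ref{th:GeneratorDecoding}.

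Regarding the obstacle you flag at the end: no rearrangement is needed. Once the fully known blocks $v_{t-\nu},\dots,v_{t-1}$ are moved to the right-hand side of (\ref{eq:SolveH}), the remaining coefficient matrix---the block columns indexed by $v_t,\dots,v_{t+j}$---is literally $H_j^c$ as written in (\ref{Hj}): the block column for $v_t$ is $(H_0^\top,H_1^\top,\dots,H_j^\top)^\top$, for $v_{t+1}$ it is $(0,H_0^\top,\dots,H_{j-1}^\top)^\top$, and so on. Hence $\mathcal{H}^e$ is a column submatrix of $H_j^c$ with the erased positions of $v_t$ lying among the first $n$ columns, and Proposition~\ref{prop1} applies without any permutation.
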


When the erasure decoding process occurs in its natural direction, i.e., from left to right, it is called forward decoding.
In \cite{TRS:Decod}, the forward recovering rate per window, $R_j$, is defined as $R_j=\frac{d_j^c-1}{(j+1)n}$, in particular, for MDP codes, $R_j=\frac{(j+1)(n-k)}{(j+1)n}$.
 
\begin{corollary}\cite{TRS:Decod}
  If for a non-catastrophic $(n,k,\delta)$ MDP convolutional code $\mathcal{C}$, in any
  sliding window of length $(j+1)n$ at most $(n-k)(j+1)$
  erasures occur for some $j\in\{0,\dots,L\}$, then full error correction from left to right (i.e. by forward decoding) is possible by using a parity-check matrix of the code. 
\end{corollary}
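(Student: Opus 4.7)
The plan is to obtain this corollary as an essentially immediate consequence of Theorem \ref{th:ParityDecoding} together with the defining property of MDP codes and the monotonicity clause of Theorem \ref{Th:SingletonColumn}. First I would observe that since $\mathcal{C}$ is MDP, we have $d_L^c = (n-k)(L+1)+1$, and the second part of Theorem \ref{Th:SingletonColumn} then forces $d_j^c = (n-k)(j+1)+1$ for every $j \in \{0,\dots,L\}$. Consequently, the hypothesis that at most $(n-k)(j+1)$ erasures occur in each sliding window of length $(j+1)n$ is exactly the statement that at most $d_j^c - 1$ erasures occur in each such window.

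Next, I would invoke Theorem \ref{th:ParityDecoding} to conclude that the transmitted sequence $v(z)$ can be completely recovered via the parity-check matrix. The only remaining point is to justify the word \emph{forward}. For this I would describe the decoding procedure explicitly: assume (inductively) that $v_0,\dots,v_{t-1}$ have already been recovered and consider the window $v_{t},\dots,v_{t+j}$. After substituting the known coefficients $v_{t-\nu},\dots,v_{t-1}$ into the system (\ref{eq:SolveH}), the unknowns reduce to the erasures inside the current window, and the hypothesis guarantees at most $d_j^c - 1$ of them. By the argument in Theorem \ref{th:ParityDecoding} (equivalently, by Proposition \ref{prop1} applied to $H_j^c$), the corresponding submatrix $\mathcal{H}^e$ is full column rank, so the erasures in positions $t,\dots,t+j$ can be uniquely solved for. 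Sliding the window one step to the right and repeating yields a forward decoder that recovers the whole sequence.

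I do not expect a significant obstacle here: the result is a direct specialization of Theorem \ref{th:ParityDecoding} to MDP codes, with the monotonicity property of column distances ensuring that the same bound $(n-k)(j+1)$ works uniformly in $j \leq L$. The only mildly delicate point is checking that the inductive, left-to-right application of (\ref{eq:SolveH}) is legitimate, i.e., that at each step the already-decoded past symbols can be absorbed into the right-hand side without enlarging the set of unknowns; this is a routine bookkeeping argument and does not require any new idea beyond what is contained in the proof of Theorem \ref{th:ParityDecoding}.
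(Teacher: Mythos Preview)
Your proposal is correct and is exactly the natural derivation: the MDP property gives $d_L^c=(n-k)(L+1)+1$, Theorem~\ref{Th:SingletonColumn} then forces $d_j^c=(n-k)(j+1)+1$ for all $j\le L$, and Theorem~\ref{th:ParityDecoding} applies directly. Note that the paper itself does not supply a proof of this corollary; it is simply cited from \cite{TRS:Decod}, so there is nothing further to compare against.
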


Even though MDP codes are optimal in the sense of having the maximum possible column distances, for the correction of certain erasure patterns codes with stronger decoding capabilities are needed. Complete MDP convolutional codes, a subfamily of MDP codes, are introduced in \cite{TRS:Decod} and studied in \cite{J:CompleteMDP}. When there are many erasures and it is not possible to correct them all, complete MDP convolutional codes will be able to restart the decoding process after a sufficiently large window of correct symbols, called guard space.

\begin{definition}
    Given $n,k,j$ and $\nu$ integers with $k< n$, the set\\ $(\ell_1,\ell_2,\dots,\ell_{(j+1)(n-k)})$ with $\ell_i\in\{1,\dots,(j+1+\nu)n\}$ and $\ell_i<\ell_s$ for $i<s$ is called a non-trivial set of $(n,k,j,\nu)$-parity-check indices if there exists $H(z)=\sum_{i=0}^\nu H_iz^i\in\mathbb{F}[z]^{(n-k)\times n}$ such that the full-size minor of 
    \begin{equation*} \mathcal{H}_j=\left(\begin{array}{cccccc}
     H_\nu   & \dots &   H_0 &      &        &\\
            &  H_\nu   & \dots &   H_0 &        &\\
            &       & \ddots &       & \ddots &\\
            &       &        &  H_\nu   & \dots &   H_0\\
    \end{array}\right)\in\mathbb F^{(j+1)(n-k)\times (j+1+\nu)n}
    \end{equation*}
   
constituted by the columns with indices $\ell_1,\ell_2,\dots,\ell_{(j+1)(n-k)}$ is different from zero. A full-size minor of $\mathcal{H}_j$ formed by a non-trivial set of $(n,k,j,\nu)$-parity-check indices is then called non-trivial full-size minor of $\mathcal{H}_j$.
\end{definition}

\begin{definition}
A  non-catastrophic $(n,k,\delta)$ convolutional code $\mathcal{C}$ is called complete MDP if $(n-k)\mid\delta$ and $\mathcal{C}$ possesses a parity-check matrix $H(z)$ with $\deg(H(z))=\nu$ such that all non-trivial full-size minors of $\mathcal{H}_L$ are nonzero.

\end{definition}

It is easy to see that to obtain a complete-MDP convolutional code, the parity check matrix has to fulfill very strong conditions, which requires large finite fields \cite{TRS:Decod,J:CompleteMDP}.
Therefore, in \cite{AL:j-MDPComplete} a generalization of complete MDP convolutional codes called complete $j$-MDP convolutional codes was introduced.

\begin{definition}\cite{AL:j-MDPComplete}\label{Def:CompletejMDP}
    Let $(n-k)\mid\delta$ and $\mathcal{C}$ be an $(n,k,\delta)$ convolutional code with left prime parity-check matrix $H(z)=H_0+\cdots+H_{\nu} z^{\nu}$ where $\nu=\frac{\delta}{n-k}$. We call $\mathcal{C}$ a complete $j$-MDP convolutional code if all non-trivial full-size minors
    of $\mathcal{H}_j$ are nonzero.
\end{definition}

\begin{remark}
Non-trivial full-size minors can be defined for the matrix $H_j^c$ (and $G_j^c$) in a similar way.
Moreover, Theorem \ref{MDS_H} states that a convolutional code with parity-check matrix $H(z)$ achieves the upper bound for the $j$-th column distance if and only if all non-trivial full-size minors of $H_j^c$ are non-zero (similar for Theorem \ref{Th:SingBoundcolumnGenerator} and $G_j^c$).
\end{remark}

\begin{lemma}\cite{AL:j-MDPComplete}\label{lemma:nottriviallyzeroH}
    A set of indices $\ell_1,\dots,\ell_{(j+1)(n-k)}$ is a non-trivial set of $(n,k,j,\nu)$-parity-check indices if and only if
    \begin{enumerate}
        \item[(i)] $\ell_{(n-k)s+1}>sn$
        \item[(ii)] $\ell_{(n-k)s}\leq n(s+\nu)$
    \end{enumerate}
    for $s=1,\dots,j$.
\end{lemma}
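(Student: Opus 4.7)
My plan is to prove both implications by exploiting the block-banded structure of $\mathcal{H}_j$, which consists of $(j+1)$ horizontal row blocks of $(n-k)$ rows each, with the $s$-th row block having nonzero entries only in columns $(s-1)n+1,\ldots,(s+\nu)n$.

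For the necessity direction, I would show that violating (i) or (ii) forces a rectangular zero submatrix large enough to make the minor vanish for every choice of $H(z)$. If (i) fails at some $s\in\{1,\ldots,j\}$, then at least $(n-k)s+1$ of the chosen columns lie in $[1,sn]$, on which the $(j+1-s)(n-k)$ rows in row blocks $s+1,\ldots,j+1$ are identically zero; the row count plus column count sums to $(j+1)(n-k)+1$, exceeding the order of the square minor, so it is structurally zero. A symmetric argument, using that the rows in row blocks $1,\ldots,s$ vanish beyond column $(s+\nu)n$, handles a violation of (ii).

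For sufficiency I plan to combine Hall's marriage theorem with a genericity argument on $H(z)$. First I would verify that (i) and (ii) imply Hall's condition for the bipartite graph whose left vertices are the $(j+1)(n-k)$ rows of the selected submatrix and whose right vertices are the selected columns, with edges marking the structurally nonzero entries. Because rows in block $s$ all have constraint $((s-1)n,(s+\nu)n]$, a non-contiguous subset of row blocks decomposes into maximal ``dense'' sub-runs whose neighborhoods are either pairwise disjoint or merge into the neighborhood of their contiguous hull, so it suffices to check the Hall inequality on contiguous ranges $\{a,\ldots,b\}$. For such a range, the inequality requires $\geq (b-a+1)(n-k)$ selected columns inside $((a-1)n,(b+\nu)n]$; subtracting from the total $(j+1)(n-k)$ the bound $(a-1)(n-k)$ on chosen columns in $[1,(a-1)n]$ (from (i) at $s=a-1$) and the bound $(j+1-b)(n-k)$ on chosen columns in $((b+\nu)n,(j+1+\nu)n]$ (from (ii) at $s=b$) produces exactly the required count.

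The step I expect to be the main obstacle is passing from the existence of a Hall matching to an actual $H(z)$ realizing a nonzero minor: the Toeplitz structure forces each entry of $H_r$ to appear in up to $j+1$ distinct positions of $\mathcal{H}_j$, so different permutations in the Leibniz expansion can contribute the same monomial, and the existence of a matching alone does not rule out cancellation. I would treat the entries of $H_0,\ldots,H_\nu$ as algebraically independent indeterminates and focus on the greedy matching $\sigma^\ast$ that sends each row of the selected submatrix to its earliest feasible chosen column (whose existence follows from the Hall condition above). Under a lexicographic monomial order keyed by the column-block index and internal column position of each matched entry, the term contributed by $\sigma^\ast$ is the unique leading monomial of the Leibniz expansion, so the determinant is a nonzero polynomial in the symbolic entries of $H(z)$; a standard Schwartz--Zippel/large-field specialization then yields an $H(z)$ realizing a nonzero minor.
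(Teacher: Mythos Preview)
The paper does not prove this lemma; it is quoted from \cite{AL:j-MDPComplete} without proof. The closest comparison is the paper's own proof of the generator-matrix analogue, Theorem~\ref{Prop:NotTriviallyZeroMinorsG}.

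Your necessity argument via rectangular zero blocks is exactly the approach used there (and is the standard one). For sufficiency, the paper's analogue is much terser: it only argues that conditions (i) and (ii) bound how many selected columns can fall into each tail block, and asserts that this lets the selection ``be complemented to a non-trivial full-size minor''. In effect the paper is implicitly using the same Hall-type counting you spell out, but it never addresses the point you correctly flag as the crux, namely that the block-Toeplitz repetitions could in principle cause cancellation among permutations with identical monomials. Your route through Hall's condition on contiguous row-block ranges, followed by a genericity argument on the entries of $H_0,\ldots,H_\nu$, is a genuinely more careful treatment of the same idea. One small point: the ``row-by-row earliest feasible column'' greedy you describe need not produce a matching for general bipartite graphs, but here the row neighborhoods are intervals ordered by both endpoints, so the greedy does succeed; you should say this explicitly. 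Likewise, the claim that the greedy term is the unique lex-leading monomial deserves one more sentence of justification (e.g.\ any other matching must, at the first row where it differs, use a strictly later column and hence a strictly later block index, which lowers the term in your order). With those clarifications your argument is complete and more rigorous than the paper's analogous proof; alternatively, once one knows that complete $j$-MDP codes exist (as the paper later shows by an explicit construction), sufficiency follows immediately by exhibiting such an $H(z)$.
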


Note that a complete $L$-MDP convolutional code is the same as a complete MDP convolutional code. Furthermore, it is easy to see that a complete $j$-MDP code reaches the bound of Theorem \ref{Th:SingletonColumn} for all $i\leq j$, that is $d_i^c=(i+1)(n-k)+1$ for $i=0,\dots,j$. 
Up to a time delay of $j$ time instants in the decoding process, these codes have the same erasure correcting capabilities as complete MDP convolutional codes.
Hence, one can say that these codes are optimal for decoding with maximum time delay $j$ and it is easier to accomplish a construction for complete $j$-MDP convolutional codes since the conditions on the parity-check matrix are weaker.

In \cite{AL:j-MDPComplete} it is shown that by using the parity-check matrix of a complete $j$-MDP code it is possible to compute a new guard space when the decoding process stops. In \cite{TRS:Decod} this is also proved for the case $j=L$.

\begin{theorem}\cite[Theorem~6]{AL:j-MDPComplete}\label{Th:CompleteMRD_guardSpace}
    If $\mathcal{C}$ is a complete $j$-MDP convolutional code, then it has the following decoding properties:
    \begin{enumerate}[(i)]
        \item If in any sliding window of length $(j+1)n$ at most $(j+1)(n-k)$ erasures occur, complete recovery is possible.
        \item  If in a window of size $(\nu+j+1)n$ there are not more than $(j+1)(n-k)$ erasures, and if they are distributed in such a way that between positions $1$ and $sn$ and between positions $(\nu+j+1)n$ and $(\nu+j+1)n-sn+1$ for $s=1,2,\dots,j+1$, there are not more than $s(n-k)$ erasures, then full correction of all symbols in this interval is possible by using the parity-check matrix.
    \end{enumerate}
\end{theorem}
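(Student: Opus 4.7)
\textbf{Part (i)} follows immediately from Theorem \ref{th:ParityDecoding}: since a complete $j$-MDP code is in particular $j$-MDP, one has $d_j^c=(j+1)(n-k)+1$, and up to $d_j^c-1=(j+1)(n-k)$ erasures in a window of length $(j+1)n$ can be corrected via the parity-check matrix.

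For \textbf{Part (ii)}, the plan is to set up the sliding parity-check equation $\mathcal{H}_j\bar v^T=0$ on the window of $\nu+j+1$ blocks and split it as $\mathcal{H}_j^e(\bar v^e)^T=-\mathcal{H}_j^r(\bar v^r)^T$ in the manner of (\ref{eq_sys}). If $e_1<\cdots<e_m$ denote the erasure positions (with $m\leq(j+1)(n-k)$), unique recovery is equivalent to the $m$ columns of $\mathcal{H}_j$ indexed by the $e_i$'s being linearly independent. The hypothesis that at most $s(n-k)$ erasures lie in the first $sn$ positions gives $\#\{i : e_i\leq sn\}\leq s(n-k)$ for $s=1,\dots,j$, and, after reindexing via $s'=j+1-s$, the symmetric bound on the last $sn$ positions yields $\#\{i : e_i\leq (s'+\nu)n\}\geq m-(j+1-s')(n-k)$ for $s'=1,\dots,j$. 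When $m=(j+1)(n-k)$, these are precisely conditions (i) and (ii) of Lemma \ref{lemma:nottriviallyzeroH}, so $(e_1,\dots,e_m)$ is a non-trivial set of $(n,k,j,\nu)$-parity-check indices and the complete $j$-MDP assumption makes the corresponding minor of $\mathcal{H}_j$ nonzero, giving the required independence.

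When $m<(j+1)(n-k)$, I plan to augment the erasure set by $(j+1)(n-k)-m$ dummy indices to obtain a non-trivial set in the sense of Lemma \ref{lemma:nottriviallyzeroH}; the nonvanishing of the full-size minor on the augmented set will then force the $m$ original columns to be linearly independent. The main obstacle is carrying out this augmentation: a natural ``safe'' region is $\bigcap_{s=1}^{j}(sn,(s+\nu)n]=(jn,(\nu+1)n]$, where placing dummy indices leaves both Lemma conditions intact, but this interval is empty when $j>\nu$ (for instance at $j=L$ with $k<n-k$). In that regime the dummies must be distributed across several sub-intervals while simultaneously respecting the upper bound $\#\{\ell_i\leq sn\}\leq s(n-k)$ and the lower bound $\#\{\ell_i\leq (s+\nu)n\}\geq s(n-k)$, and the delicate technical step is verifying that enough room is always available, for example through a position-by-position greedy extension that allocates indices to the tightest unsatisfied constraint first.
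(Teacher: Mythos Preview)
The paper does not give a proof of this theorem at all: it is quoted from \cite[Theorem~6]{AL:j-MDPComplete} and left without argument. So there is no ``paper's proof'' to compare against, and the only question is whether your sketch stands on its own.

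Part~(i) is complete: the paper states just before Definition~\ref{Def:CompletejMDP} that a complete $j$-MDP code satisfies $d_i^c=(i+1)(n-k)+1$ for $i\le j$, and then Theorem~\ref{th:ParityDecoding} applies directly. For Part~(ii), your reduction to the system $\mathcal H_j^{e}(\bar v^{e})^T=-\mathcal H_j^{r}(\bar v^{r})^T$ and the translation of the two distribution hypotheses into the inequalities $\#\{i:e_i\le sn\}\le s(n-k)$ and $\#\{i:e_i>(s+\nu)n\}\le (j+1-s)(n-k)$ are correct, and the case $m=(j+1)(n-k)$ is then finished cleanly via Lemma~\ref{lemma:nottriviallyzeroH} and the complete $j$-MDP hypothesis.

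The genuine gap is exactly the one you already name: the augmentation when $m<(j+1)(n-k)$. What you have written is a plan, not a proof. Two points deserve emphasis. First, your formulation is the right one: for a complete $j$-MDP code $\mathcal H_j$ has full row rank, so its column matroid has rank $(j+1)(n-k)$ and its bases are precisely the non-trivial index sets of Lemma~\ref{lemma:nottriviallyzeroH}; thus linear independence of the erasure columns is \emph{equivalent} to the existence of a non-trivial extension. Second, the greedy idea can be made to work, but it requires an actual argument, not just the phrase ``tightest unsatisfied constraint first''. Concretely, if $a_s=\#\{\ell_i\le sn\}$ and one sets $s^\ast=\max\{s\le j:a_s=s(n-k)\}$ and $s_\ast=\min\{s\le j:m-a_{s+\nu}=(j+1-s)(n-k)\}$ (with the obvious conventions when these sets are empty), then adding a dummy in any block $t$ with $s^\ast<t\le s_\ast+\nu$ preserves both families of constraints; the work is to show that this interval of blocks is always nonempty and always contains a free position, via a counting argument using $m<(j+1)(n-k)$ and the monotonicity of $a_\bullet$. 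Until those two checks are written down (or you defer to the proof in \cite{AL:j-MDPComplete}), Part~(ii) is not complete.
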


 When calculating a new guard space, we obtain a guard space recovering rate $\hat{R}_{j+\nu}$ equal to $\frac{(n-k)(j+1)}{(j+1+\nu)n}$ for $j\in\{0,\dots,L\}$, which is smaller than the corresponding forward recovering rate.

\section{Erasure decoding with the generator matrix} \label{sec_decoding}

In this section we present an alternative decoding method for convolutional codes when communicating over an erasure channel. Following a parallel path as done in \cite{TRS:Decod} where the parity-check matrix is used to this end, we propose to use the generator matrix of the code.

Let $\mathcal{C}$ be an $(n,k,\delta)$ convolutional
code with delay-free generator matrix $G(z)=\sum_{i=0}^{\mu}G_iz^{i}$.
To send a message $u(z)=\sum_{i=0}^{\ell}u_iz^i\in\mathbb F[z]^{k}$ with $\deg(u)=\ell$, we encode it as $u(z)G(z)=v(z)=\sum_{i=0}^{\mu+\ell}v_iz^i$. Note that it is also possible to represent this process block-wise, i.e.
 \begin{align}\label{encoding}
 (u_0 \cdots u_{\ell})
    \left(\begin{array}{cccccc}
        G_0 & \dots & G_\mu  &      &        &\\
            & G_0   & \cdots & G_\mu &        &\\
            &       & \ddots &       & \ddots &\\
            &       &        & G_0   & \cdots & G_\mu\\
    \end{array}\right)=(v_0 \cdots v_{\ell+\mu}).
    \end{align}

In the following, we will consider the decoding of such a message with delay $j$, i.e., latest when $v_{i+j}$ arrives, we want to recover $u_i$.

Next, we present an analogue to Proposition \ref{prop1} showing how the column distances of a convolutional code are related with its generator matrix.

\begin{theorem}\label{Porp:FirstKNotSpan}
Let $\mathcal{C}$ be a convolutional code, $G(z)=\sum_{i\in\mathbb{N}_0}G_iz^i$ a generator matrix of $\mathcal{C}$, $G_j^c$ as defined in (\ref{Gj}) and $d\in\mathbb{N}$, then the following statements are equivalent:
\begin{enumerate}
    \item $d_j^c \geq d$.
    \item None of the first $k$ rows of $(G_j^{c})^r$ is contained in the span of any other set of rows of the same matrix, where $(G_j^{c})^r$ is obtained from the matrix $G^c_j$ by erasing a set of $e$ columns with $e\leq d-1$. 
\end{enumerate}
\end{theorem}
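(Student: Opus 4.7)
The plan is to translate the definition of $d_j^c$ directly into a rank-type statement about $G_j^c$. The key identity is that for every codeword $v(z)=u(z)G(z)\in\mathcal{C}$,
$$v_{[0,j]}=[u_0\ u_1\ \cdots\ u_j]\,G_j^c,$$
since only $u_0,\dots,u_j$ contribute to $v_0,\dots,v_j$, and ${\rm wt}(v_{[0,j]}(z))$ equals the Hamming weight of the row vector on the right. Under the delay-freeness hypothesis implicit in the use of $d_j^c$, $v_0\neq 0$ is equivalent to $u_0\neq 0$, i.e.\ to at least one of the first $k$ entries of $[u_0\ u_1\ \cdots\ u_j]$ being nonzero. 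This is precisely the link between the ``first $k$ rows'' clause of statement~(2) and the ``$v_0\neq 0$'' clause in the definition of $d_j^c$.

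For $(1)\Rightarrow(2)$ I argue by contrapositive. Suppose that for some erasure set $E$ of $e\leq d-1$ columns, the $i$-th row of $(G_j^c)^r$ with $i\in\{1,\dots,k\}$ lies in the span of the remaining rows. Packaging the dependence into a vector $u:=e_i-\sum_{s\neq i}\alpha_s e_s\in\mathbb F^{(j+1)k}$ gives $uG_j^c$ supported inside $E$, hence of weight at most $d-1$. Writing $u=[u_0,\dots,u_j]$, the block $u_0$ retains a $1$ in position $i$ (none of the $e_s$ with $s\neq i$ touches that coordinate), so $u_0\neq 0$ and by delay-freeness $v_0=u_0G_0\neq 0$. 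The codeword $v(z):=\left(\sum_{t=0}^{j}u_t z^t\right)G(z)$ then satisfies $v_0\neq 0$ and ${\rm wt}(v_{[0,j]})\leq d-1$, contradicting $d_j^c\geq d$.

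For $(2)\Rightarrow(1)$, again by contrapositive: if $d_j^c<d$, take $v(z)\in\mathcal{C}$ with $v_0\neq 0$ and ${\rm wt}(v_{[0,j]})\leq d-1$, and let $u(z)$ be a corresponding information sequence. Set $E:={\rm supp}(v_{[0,j]})$, so $|E|\leq d-1$, and erase those columns from $G_j^c$; then $[u_0,\dots,u_j](G_j^c)^r=0$. Delay-freeness yields some $i\in\{1,\dots,k\}$ with $u_{0,i}\neq 0$, and dividing that identity through by $u_{0,i}$ exhibits the $i$-th row of $(G_j^c)^r$ as an explicit linear combination of the other rows.

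The argument is essentially bookkeeping, so I do not expect a serious obstacle. The one point that merits care is the implicit use of delay-freeness, which is what guarantees that a nonzero $v_0$ forces a nonzero coordinate of $u_0$, thereby locating the row-dependence inside the \emph{first $k$} rows of $(G_j^c)^r$; without it one could have $u_0\neq 0$ while $v_0=u_0G_0=0$, and the correspondence between the two statements would fail. Since $d_j^c$ is only defined for delay-free codes earlier in the paper, this hypothesis can be invoked freely.
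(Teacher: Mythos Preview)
Your proof is correct and follows essentially the same approach as the paper: both directions are argued by contrapositive/contradiction, translating a linear dependence among the rows of $(G_j^c)^r$ (with a nonzero coefficient among the first $k$) into a truncated codeword of small weight with $u_0\neq 0$, and vice versa. Your write-up is slightly more explicit about the role of delay-freeness; note that in the $(2)\Rightarrow(1)$ direction it is actually not needed (from $v_0=u_0G_0\neq 0$ one gets $u_0\neq 0$ regardless of the rank of $G_0$), whereas in $(1)\Rightarrow(2)$ it is essential, exactly as you identified.
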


 \begin{proof}
 Let us denote by $g_i$ the rows of $G^c_j$ and by $g_i^{r}$ the rows of the matrix $(G_j^{c})^r$, obtained from $G^c_j$ by erasing a set of $e$ columns with $e \leq d^c_j-1$. Let us assume that there exists $s \in \{1,2,\dots,k\}$ such that $g^{r}_s$ is contained in the span of the other rows of $(G_j^{c})^r$. Then   
  $$
 \sum_{\substack{i=0}}^{(j+1)k}\alpha_ig^{r}_i={\bf 0},
    $$
   for some $\alpha_i \in \mathbb F$, with $\alpha_s \neq 0$, and where ${\bf 0}$ is the all-zero vector. Now, consider the same linear combination of the corresponding rows of the matrix $G_j^c$, i.e.,
     $$
\sum_{\substack{i=0}}^{(j+1)k}\alpha_ig_i=\mathbf{w}.
    $$
    We know that $wt(\mathbf{w})\geq d_j^c$ and therefore at least $d_j^c$ elements of $\mathbf{w}$ are different from zero. Let $\mathbf{w}^{r}$ be the vector obtained from $\mathbf{w}$ by removing the corresponding elements as in $(G_j^{c})^r$. It is easy to see that $wt(\mathbf{w}^{r})>0$ but we also can see that $\mathbf{w}^{r}=\sum_{\substack{i=0}}^{(j+1)k}\alpha_ig_i^{r}=\mathbf{0}$. Therefore, we have a contradiction and the statement holds. 

Conversely, let us assume that $d^c_j < d$ and let $\mathbf{w}=\left[\begin{array}{cccc} u_0 & u_1 & \cdots & u_j \end{array}\right] G^c_j$, with $u_0 \neq 0$, and $wt(\mathbf{w}) = d_j^c$. Let $\mathbf{w}^{r}$ be the vector obtained from $\mathbf{w}$ by erasing the $e=d^c_j \leq d-1$ nonzero components of $\mathbf{w}$ and let $(G_j^{c})^r$ be the matrix obtained from $G^c_j$ by deleting the corresponding columns. Then we can write $\mathbf{w}^{r}=\left[\begin{array}{cccc} u_0 & u_1 & \cdots & u_j \end{array}\right] (G_j^{c})^r=\mathbf{0}$, and since $u_0 \neq 0$, it follows that one of the first $k$ rows of $(G_j^{c})^r$ is contained in the span of the other rows of $(G_j^{c})^r$.
\end{proof}

\begin{proposition}\label{Prop:FullRowRank}
    Let $\mathcal{C}$ be a delay-free $(n,k,\delta)$ convolutional code with generator matrix $G(z)$. Assume that for some $j\in\mathbb N_0$, $d_j^c=(n-k)(j+1)+1$.
   Let $(G_j^{c})^r$ be obtained from the matrix $G_j^{c}$ by erasing $\sum_{t=0}^je_t\leq d_j^c-1$ columns holding that
    $$
    \sum_{t=0}^ie_t\geq d_i^c
    $$
    for all $i\in\{0,\dots,j-1\}$, where $e_t$ is the amount of columns erased in the $t$-th set of $n$ columns of $G_j^{c}$.  
    Then, $(G_j^{c})^r$ is full row rank.
\end{proposition}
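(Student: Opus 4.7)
The plan is to argue by contradiction. Suppose $(G_j^c)^r$ is not full row rank; then there is a nonzero vector $[u_0,u_1,\dots,u_j]$ with $u_i\in\mathbb F^k$ such that $[u_0,\dots,u_j](G_j^c)^r=\mathbf 0$. Writing $w=[u_0,\dots,u_j]G_j^c$, this means $w$ vanishes on every non-erased position, so ${\rm wt}(w)$ cannot exceed the total number of erased columns, namely $\sum_{t=0}^j e_t\leq d_j^c-1$.

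Next, I would let $s$ be the smallest index with $u_s\neq 0$. The block-Toeplitz structure of $G_j^c$ in (\ref{Gj}) immediately gives two things: the first $sn$ entries of $w$ vanish, and the last $(j-s+1)n$ entries of $w$ coincide with $[u_s,\dots,u_j]\,G_{j-s}^c$, after identifying the lower-right sub-block of $G_j^c$ (block rows $s,\dots,j$ and block columns $s,\dots,j$) with $G_{j-s}^c$. Combining the hypothesis $d_j^c=(n-k)(j+1)+1$ with the second part of Theorem \ref{Th:SingletonColumn} yields $d_i^c=(n-k)(i+1)+1$ for every $i\leq j$; in particular $d_{j-s}^c=(n-k)(j-s+1)+1$. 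Since $u_s\neq 0$, the definition of $d_{j-s}^c$ then forces ${\rm wt}(w)\geq d_{j-s}^c$.

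The contradiction now reduces to arithmetic. If $s=0$ the two bounds give $d_j^c\leq{\rm wt}(w)\leq d_j^c-1$, which is absurd. If $s\geq 1$ I would apply the hypothesis at index $i=s-1\in\{0,\dots,j-1\}$: the number of erasures lying in the last $(j-s+1)n$ columns equals
$$\sum_{t=s}^{j}e_t \;=\; \sum_{t=0}^{j}e_t - \sum_{t=0}^{s-1}e_t \;\leq\; (d_j^c-1)-d_{s-1}^c \;=\; (n-k)(j-s+1)-1,$$
which is strictly smaller than $d_{j-s}^c$, again contradicting the lower bound on the weight of the last $(j-s+1)n$ entries of $w$.

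The main bookkeeping obstacle is verifying that the lower-right submatrix of $G_j^c$ is literally $G_{j-s}^c$, but this is immediate from the block-Toeplitz form in (\ref{Gj}) after the relabeling $r\mapsto r-s$, $c\mapsto c-s$ of block row and column indices. Once that identification is in place, the proof is essentially one application of the MDP cascade from Theorem \ref{Th:SingletonColumn} combined with the given prefix condition $\sum_{t=0}^{s-1}e_t\geq d_{s-1}^c$, and no deeper structural input is needed.
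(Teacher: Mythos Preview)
Your argument is correct and follows essentially the same route as the paper's proof: both argue by contradiction, pick the first nonzero block $u_s$ (the paper calls it $u_h$), use the block-Toeplitz structure to identify the relevant tail with $G_{j-s}^c$, invoke the cascade $d_i^c=(n-k)(i+1)+1$ from Theorem~\ref{Th:SingletonColumn}, and obtain the contradiction from $\sum_{t=s}^{j}e_t\leq d_j^c-1-d_{s-1}^c<d_{j-s}^c$. The only cosmetic difference is that the paper phrases the tail step as $\bar u(G_{j-h}^c)^r=\mathbf 0$ rather than comparing weights directly.
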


\begin{proof}
    We are going to prove this result by contradiction. Assume that the matrix $(G_j^{c})^r$ is not full row rank. Then, there exists a message $u=(u_0,\dots,u_j)\not=\mathbf{0}$ such that $u(G_j^{c})^r=\mathbf{0}$. For this to be possible we need that $wt(uG_j^{c})\leq\sum_{t=0}^je_t$ but, if $u_0\not=\mathbf{0}$, we know that $wt(uG_j^{c})\geq d_j^c$ which is a contradiction by hypothesis since $\sum_{t=0}^je_t\leq d_j^c-1$. 
    
    Now, let us assume that $u_0=\mathbf{0}$, let $h$ be minimal such that $u_h$ is different from the all-zero vector and $\Bar{u}=(u_h,\dots,u_j)$. From the assumptions of the theorem it follows that $\sum_{t=h}^je_t\leq d_{j}^c-d^c_{h-1}-1=(j-h+1)(n-k)-1$, as $d_{h-1}^c=h(n-k)+1$ according to Theorem \ref{Th:SingletonColumn}. Since we assumed that $u(G_j^{c})^r=\mathbf{0}$, we have that $\Bar{u}(G_{j-h}^{c})^r=\mathbf{0}$. Therefore, $wt(\Bar{u}G_{j-h}^{c})\leq \sum_{t=h}^je_t$. On the contrary, we know that $wt(\Bar{u}G_{j-h}^{c})\geq d_{j-h}^c=(j-h+1)(n-k)+1$, which contradicts $\sum_{t=h}^je_t\leq (j-h+1)(n-k)-1$.
\end{proof}

The next theorem is the analogue of Theorem \ref{th:ParityDecoding} when we decode using a generator matrix.

\begin{theorem}\label{th:GeneratorDecoding}
Let $\mathcal{C}$ be a delay-free $(n,k,\delta)$ convolutional code with $d^c_j$ the $j$-th column distance. If for some $j\in\mathbb N_0$, in any sliding window $v_i,v_{i+1}, \dots, v_{i+j}$ of length $(j+1)n$ at most $d^c_{j}-1$ erasures occur, then we can completely recover the transmitted sequence $v(z)=\sum_{i \in \mathbb N_0} v_i z^i$ by using a generator matrix $G(z)$ of $\cal C$.
\end{theorem}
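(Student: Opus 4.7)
The plan is to mirror the parity-check decoding argument (Theorem \ref{th:ParityDecoding}) but using the sliding generator matrix $G_j^c$ and Theorem \ref{Porp:FirstKNotSpan} in place of the parity-check version. The idea is to recover the information symbols $u_0,u_1,\dots$ one at a time by sliding a window of $j+1$ received blocks along the stream; once the $u_i$'s are known, (\ref{encoding}) determines $v(z)$ completely.

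First I would treat the initial window $v_0,v_1,\dots,v_j$. By (\ref{encoding}),
\[
(u_0,u_1,\dots,u_j)\,G_j^c=(v_0,v_1,\dots,v_j),
\]
and by hypothesis at most $d_j^c-1$ entries on the right-hand side are erased. Writing $E$ for the set of erased coordinate positions in the window, let $(G_j^c)^r$ be obtained from $G_j^c$ by deleting the columns indexed by $E$ and let $(v_0,\dots,v_j)^r$ denote the analogous restriction. Then
\[
(u_0,u_1,\dots,u_j)\,(G_j^c)^r=(v_0,v_1,\dots,v_j)^r,
\]
and the right-hand side is fully known. Applying Theorem \ref{Porp:FirstKNotSpan} with $d=d_j^c$ (so that $e=|E|\le d-1$), none of the first $k$ rows of $(G_j^c)^r$ lies in the span of the remaining rows. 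Therefore $u_0$ is uniquely determined: if $(u_0,\dots,u_j)$ and $(u_0',\dots,u_j')$ were two solutions with $u_0\ne u_0'$, then $\alpha=(u_0-u_0',\dots,u_j-u_j')$ would satisfy $\alpha\,(G_j^c)^r=0$ with some coordinate $s\in\{1,\dots,k\}$ of $\alpha$ nonzero, forcing the $s$-th row of $(G_j^c)^r$ into the span of the others, a contradiction.

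Having recovered $u_0$, the induction step is then straightforward. Suppose $u_0,u_1,\dots,u_{i-1}$ are known; slide the window to $v_i,v_{i+1},\dots,v_{i+j}$ and subtract the known contributions to obtain
\[
(u_i,u_{i+1},\dots,u_{i+j})\,G_j^c=\Bigl(v_i-\sum_{s<i}u_sG_{i-s},\ \dots,\ v_{i+j}-\sum_{s<i}u_sG_{i+j-s}\Bigr),
\]
where the positions of unknown (erased) entries coincide with those in $(v_i,\dots,v_{i+j})$. The sliding-window hypothesis again yields at most $d_j^c-1$ erasures in this window, so exactly the same application of Theorem \ref{Porp:FirstKNotSpan} forces $u_i$ to be uniquely determined by the restricted linear system. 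Iterating, all $u_i$ are recovered and hence the entire codeword $v(z)$.

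I do not expect a serious obstacle; the only points requiring care are the bookkeeping in the inductive step (making sure that subtracting the contributions of previously decoded $u_s$ preserves both the block structure $G_j^c$ on the left and the erasure pattern on the right, so that Theorem \ref{Porp:FirstKNotSpan} applies unchanged) and noting that delay-freeness is already built into the standing assumption, which is what makes the column distance $d_j^c$ well defined in the first place.
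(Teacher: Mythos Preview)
Your proposal is correct and follows essentially the same approach as the paper: both set up the sliding system $(u_t,\dots,u_{t+j})G_j^c=(v_t,\dots,v_{t+j})-(\text{known past})$, restrict to the non-erased columns, and invoke Theorem~\ref{Porp:FirstKNotSpan} to conclude that the first block $u_t$ is uniquely determined. The only difference is cosmetic: the paper, with an eye towards the algorithm that follows, searches for the \emph{smallest} $h\le j$ with $\sum_{i=0}^{h}e_i\le d_h^c-1$ and solves the smaller system over $G_h^c$, whereas you always take $h=j$; for the bare existence statement your choice is the simpler and entirely sufficient one.
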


\begin{proof}
Assume we have been able to decode the messages correctly up to a time instant $t-1$, i.e., $u_0,\dots,u_{t-1}$ are known, then we have the following system of linear equations: 
 \begin{gather}\label{eq:GdecodSys}
 (u_{t-\mu},\dots, u_{t-1}\ |\ u_t, \dots, u_{t+j})
    \left(\begin{array}{cccccc}
          G_{\mu}  &  \dots  & G_{\mu+j}  \\
          \vdots   &         & \vdots     \\
          G_1      &  \cdots & G_{j+1}    \\
            \hline
          G_0      &  \cdots & G_{j}      \\
                   & \ddots  & \vdots     \\
                   &         & G_0
    \end{array}\right)=(v_{t}, \dots, v_{t+j})
\end{gather}
    
The vector $(u_t, \dots, u_{t+j})$ consists of the unknowns we want to recover and each component of the vector $v_{t-\mu},\dots, v_{t+j}$ which is received correctly gives us one equation that we can use for the recovery. Let $\bar E$ be the set of indices of the vector $(v_t, \dots, v_{t+j})$ where there are no erasures and $\bar E_i=\bar E \cap [0, (i+1)n]$, $i=0,1, \dots,j$. As $u_{t-\mu},\hdots,u_{t-1}$ are assumed to be known, for the decoding, we have to solve a system of linear equations of the form

 \begin{gather}\label{system}
 (u_t, \dots, u_{t+j})
    \left(\begin{array}{ccc}
                   G_0   & \cdots & G_{j}\\
               & \ddots & \vdots\\
               & & G_0
\end{array}\right)^{r_j}=
(v_t, \dots, v_{t+j})^{r_j}-  (u_{t-\mu},\dots,u_{t-1})\begin{pmatrix} G_{\mu}    &    \dots     & G_{\mu+j}\\
                      \vdots  &  & \vdots\\
            G_1   & \cdots & G_{j+1}\end{pmatrix}^{r_j} 
    \end{gather}

where the superscript $r$ ($r_i$) indicates that we only consider the columns of the matrix with indices in $\bar E$ $(\bar E_i)$. The right-hand side of this equation is known, hence it can be used to recover the unknowns. Let $e_i$ be the number of erased columns in the $i$-th set of $n$ columns of $G^c_{j}$, then by hypothesis we have that $\sum_{i=t}^{t+j}e_i\leq d_{j}^c-1$.

We will see that we can recover $u_t$ such that it is unique. If $e_t\leq d_0^c-1$, then we can uniquely solve the linear system 
$$
u_tG_0^{r_0}=v_t^{r_0}-(u_{t-\mu},\dots,u_{t-1})\begin{pmatrix} G_{\mu} \\ \vdots\\ G_1\end{pmatrix}^{r_0},
$$
since $G_0^{r_0}$ is a full row rank matrix by Theorem \ref{Porp:FirstKNotSpan}. 
Otherwise, if $e_t\geq d_0^c$, let $h\in\{0,\dots,j\}$ be the first integer such that $\sum_{i=0}^{h}e_i\leq d_{h}^c-1$, i.e., the first time instant for which the amount of erasures is correctable. We know that this element exists because $\sum_{i=t}^{t+j}e_i\leq d_{j}^c-1$ by hypothesis. Then, solving the following linear system:
 \begin{gather}\label{Eq:FirstMomentSolvable}
(u_t, \dots, u_{t+h})
    \left(\begin{array}{ccc}
                   G_0   & \cdots & G_{h}\\
               & \ddots & \vdots\\
               & & G_0
    \end{array}\right)^{r_h}=(v_t, \dots, v_{t+h})^{r_h}-  (u_{t-\mu},\dots,u_{t-1})\begin{pmatrix} G_{\mu}    &    \dots     & G_{\mu+h}\\
                      \vdots  &  & \vdots\\
            G_1   & \cdots & G_{h+1}\end{pmatrix}^{r_h} 
    \end{gather}
allows us to recover uniquely $u_t$ thanks to Theorem \ref{Porp:FirstKNotSpan}. After the recovery of $u_t$, one can slide along the received sequence and proceed with the decoding.
Note that according to Proposition \ref{Prop:FullRowRank}, if $d_h^c=(n-k)(h+1)+1$, we can directly recover $(u_t,\dots,u_{t+h})$ in one step.
\end{proof}

In the following, we give a pseudocode for the algorithm of the preceding theorem and illustrate it with an example. To make it easier, we consider a reduced generator matrix.


\begin{algorithm} 
\caption{Decoding algorithm with the generator matrix for convolutional codes} 
\label{Gen_Decoding}
\begin{algorithmic}[1]
 \Require Delay-free reduced generator matrix $G(z)$; received word $v(z)$ (possibly containing erasures)
 \Ensure Message $u(z)=\sum _\ell u_\ell z^\ell$ 
 \State $t=0$
 \State $j=0$
 \State $S=[\,]$ (an empty list)
 \While{$t+j<\deg{v(z)}-\deg{G(z)}$}
    \While{$\left(\sum_{\ell=t}^{t+j}e_\ell\geq d_j^c\right)\wedge\left(d_j^c<d_{free}\right)$}
        \State $j= j+1$
    \EndWhile
    \If{$\sum_{\ell=t}^{t+j}e_\ell\leq d_j^c-1$}
        \State $(u_t,\dots,u_{t+i})=$ the solution of system \eqref{system} for maximum possible $i\in\mathbb N_0$
        \State append $(u_t,\dots,u_{t+i})$ at the end of $S$
        \State $t=t+i+1$
        \State $j=0$
    \ElsIf{$d_j^c \geq d_{free}$}
        \State $t=\deg{v(z)}-\deg(G(z))$
    \EndIf
\EndWhile
\State return $S$
 
\end{algorithmic}
\end{algorithm}

\newpage

\begin{example}
Consider the $(5,2,2)$ convolutional code with reduced generator matrix $G(z)=G_0+G_1z\in\mathbb F_2[z]^{k\times n}$ where $G_0=\begin{pmatrix}
    1 & 1 & 0 & 1 & 1\\
    1 & 0 & 1 & 1 & 0
\end{pmatrix}$ and $G_1=\begin{pmatrix}
    1 & 1 & 1 & 1 & 1\\
    0 & 0 & 0 & 1 & 1
\end{pmatrix}$. One calculates $d^c_0=3$ and $d^c_i=5$ for $i\geq 5$.

Assume that we send the message $u(z)=(1+z^2,1+z^3)$, which corresponds to the codeword $v(z)=(z+z^2,1+z+z^2+z^3,z+1,z^2+z^4,1+z^2+z^3+z^4)$. Assume further the following erasure pattern: $v_0=(0,1,\not{1},\not{0},1)$, $v_1=(\not{1},1,1,0,\not{0})$, $v_2=(1,1,0,\not{1},1)$, $v_3=(0,\not{1},\not{0},0,\not{1})$, $v_4=(0,0,0,1,\not{1})$.
As in any window of size $2n=10$ there are not more than $d^c_1-1=4$ erasures, we can apply the preceding theorem with $j=1$.

The first system of equations we need to solve is 
$$(u_0\ u_1)\cdot \begin{pmatrix}
    1 & 1 & 1 & \vline & 1 & 1 & 1\\
    1 & 0 & 0 & \vline & 0 & 0 & 1\\
    0 & 0 & 0 & \vline & 1 & 0 & 1\\
    0 & 0 & 0 & \vline & 0 & 1 & 1
\end{pmatrix}=\begin{pmatrix}
    0 & 1 & 1 &\vline & 1 & 1 & 0
\end{pmatrix},
$$ which gives us $u_0=( 1\ 1)$ and $u_1=(0\ 0)$. Note that here we can recover $u_0$ and $u_1$ in one step, which is not guaranteed by the distance properties of the code. In the next step, we will see that in general it is only possible to recover one coefficient of the message per step. For the recovery of $u_2$, we consider the equations
\begin{align*}&(u_2\ u_3)\cdot \begin{pmatrix}
    1 & 1 & 0   & 1& \vline & 1 & 1\\
    1 & 0 & 1  & 0& \vline & 0 & 1\\
    0 & 0 & 0  & 0& \vline & 1 & 1\\
    0 & 0 & 0  & 0& \vline & 1 & 1
\end{pmatrix} \\& =  \begin{pmatrix}
    1 & 1 & 0 & 1 &\vline & 0 & 0
\end{pmatrix}-u_1 \cdot \begin{pmatrix}
    1 & 1 & 1   & 1& \vline & 0 & 0\\
    0 & 0 & 0  & 1& \vline & 0 & 0
\end{pmatrix}
 \\ &=  \begin{pmatrix}
    1 & 1 & 0 & 1  &\vline & 0 &0
\end{pmatrix},
\end{align*}
and obtain $u_2=(1\ 0)$. The preceding equations give us only $u_3\cdot \begin{pmatrix}
    1 & 1\\ 1 & 1 
\end{pmatrix}=\begin{pmatrix}
    1 & 1 
\end{pmatrix}$, which is not enough information to recover $u_3$. Hence, to recover $u_3$, we need to go one step further. Since we know $\deg(v(z))=4$ as well as $\deg(G(z))=1$, it is easy to see that $\deg(u(z))=3$ and thus, $u_i=0$ for $i>3$. This gives us the equations
\begin{align*}
&u_3\cdot\begin{pmatrix}
   1 & 1 & \vline & 1 & 1 & 1 & 1\\
   1 & 1 & \vline & 0 & 0 & 0 & 1
\end{pmatrix}\\&=\begin{pmatrix}
  0 & 0 & \vline & 0 & 0 & 0 & 1
\end{pmatrix}-(1\ 0)\cdot \begin{pmatrix}
    1 & 1 & \vline & 0 & 0 & 0 & 0\\
    0 & 1 & \vline & 0 & 0 & 0 & 0
\end{pmatrix}=(1\ 1\ 0\ 0\ 0\ 1)\end{align*}
with the unique solution $u_3=(0\ 1)$.
In this way we recovered the complete message.
\end{example}

\begin{corollary}
  If for an $(n,k,\delta)$ MDP convolutional code $\mathcal{C}$, in any
  sliding window of length $(j+1)n$ at most $(n-k)(j+1)$
  erasures occur for some $j\in\{0,\dots,L\}$, then full error correction from left to right (i.e. by forward decoding) is possible by using the generator matrix.
\end{corollary}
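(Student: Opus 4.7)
The plan is to derive this as an immediate application of Theorem~\ref{th:GeneratorDecoding} together with the defining distance property of MDP codes. The content is essentially bookkeeping: we just have to check that the numerical hypothesis "at most $(n-k)(j+1)$ erasures per window of length $(j+1)n$" lines up exactly with the hypothesis "at most $d_i^c-1$ erasures per window of length $(i+1)n$" that Theorem~\ref{th:GeneratorDecoding} requires, for a suitable column distance.

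First, I would invoke Theorem~\ref{Th:SingletonColumn}. By definition, an $(n,k,\delta)$ MDP convolutional code satisfies $d_L^c=(n-k)(L+1)+1$, and the second assertion of Theorem~\ref{Th:SingletonColumn} forces $d_i^c=(n-k)(i+1)+1$ for every $i\leq L$. In particular, for the fixed $j\in\{0,\dots,L\}$ in the hypothesis of the corollary we have $d_j^c=(n-k)(j+1)+1$, so $d_j^c-1=(n-k)(j+1)$.

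Next, I would translate the erasure assumption of the corollary into the hypothesis of Theorem~\ref{th:GeneratorDecoding}. By assumption, in every sliding window $v_i,v_{i+1},\dots,v_{i+j}$ of length $(j+1)n$ the number of erasures is at most $(n-k)(j+1)=d_j^c-1$. Since $\mathcal{C}$ is MDP it is in particular delay-free (as needed to even speak of column distances in the delay-free setting used throughout Section~\ref{sec_decoding}), so Theorem~\ref{th:GeneratorDecoding} applies directly and yields complete recovery of $v(z)=\sum_{i\in\mathbb{N}_0}v_iz^i$ from left to right by means of a generator matrix of~$\mathcal{C}$.

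I do not anticipate a real obstacle here: the corollary is a direct specialization of Theorem~\ref{th:GeneratorDecoding} to the MDP case. The only point worth emphasizing in the write-up is that, because $d_j^c$ attains its Singleton-type bound, one can additionally invoke Proposition~\ref{Prop:FullRowRank} to recover the block $(u_t,\dots,u_{t+j})$ in a single linear-algebra step (rather than one coefficient $u_t$ at a time, as in the general proof of Theorem~\ref{th:GeneratorDecoding}); this gives a cleaner forward-decoding picture and matches the parity-check statement in the pre-existing corollary from \cite{TRS:Decod}.
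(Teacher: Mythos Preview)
Your proposal is correct and matches the paper's approach: the corollary is stated without proof in the paper, as an immediate specialization of Theorem~\ref{th:GeneratorDecoding} to the MDP case via $d_j^c=(n-k)(j+1)+1$. Your added remark about Proposition~\ref{Prop:FullRowRank} enabling single-step recovery is a nice extra observation but is not needed for the argument.
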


\begin{example}\label{Ex:MDP1}
Consider a $(3,2,18)$ MDP convolutional code over an erasure channel. Its forward recovering rate $R_L$ is $\frac{28}{84}$ since $L=27$. Assume we receive the following pattern of erasures:
\begin{multline*}
  \cdots\bfv\bfv|\overbrace{\star\star\dots\star}^{20}\overbrace{\bfv\bfv\cdots\bfv}^{42}\overbrace{\star\star\cdots\star}^{14}\overbrace{\bfv\bfv\cdots\bfv}^{8}|\\\overbrace{\star\star\dots\star}^{11}\overbrace{\bfv\bfv\cdots\bfv}^{18}\overbrace{\star\star\bfv\bfv\bfv\star\star\bfv\bfv\bfv\cdots\star\star\bfv\bfv\bfv}^{55}|\cdots
\end{multline*}
where $\star$ refers to an erased component of the vectors, $\bfv$ to a correctly received component and $|$ marks a window of size $84$. We assume that all the previous messages have been correctly decoded. It is noticeable that the amount of erasures in this window is not correctable, nevertheless we can consider a shorter window (see Theorem \ref{th:GeneratorDecoding}). In this case, to correct the first set of erasures we can consider the window of length $60$ ($j=19$) together with the window of $27$ ($n\mu$) previous symbols, which are known.
\begin{equation*}
    \cdots\bfv\bfv|\overbrace{\star\star\dots\star}^{20}\overbrace{\bfv\bfv\cdots\bfv}^{40}\cdots
\end{equation*}
After correcting these erasures we can slide the window to try to recover the next set of $14$ lost symbols. It is easy to see that, even though considering an MDP code, performing forward correction in this case is not possible.
\end{example}

\section{Complete MDP convolutional codes from the generator matrix}\label{sec_CMDP}

MDP convolutional codes are optimal for decoding when full recovery from left to right is possible. In this section, we want to consider the scenario of having too many erasures for recovery in some part of the sequence which causes the complete forward decoding to fail due to the lack of a guard space. However, we want to see how and under which conditions it is possible to compute a new guard space to continue with the decoding even if some earlier parts of the received and the information sequence had to be declared as lost. 

From equation (\ref{eq:GdecodSys}) we can see that when decoding we need to find a block of $n\mu$ correct symbols $v_{t-\mu}$ to $v_{t-1}$ ahead of a block of $(j+1)n$ symbols, $v_t,\dots,v_{t+j}$ where no more than $d^c_j\leq(j+1)(n-k)$ erasures occur, in order to continue the recovery process. In other words, we need to have some guard space.

Let $\mathcal{C}$ be an $(n,k,\delta)$ convolutional code with generator matrix $G(z)=\sum_{i=0}^\mu G_iz^i$, $u(z)=\sum_{i=0}^\ell u_iz^i\in\mathbb{F}[z]^k$ and $u(z)G(z)=v(z)=\sum_{i=0}^{\mu+\ell} v_iz^i\in\mathbb{F}[z]^n$ and 
\begin{equation}
    \mathcal{G}^c_j=\begin{pmatrix}
        G_{\mu} & & \\
                \vdots & \ddots & \\
                       G_0 &  & G_\mu \\
        & \ddots   & \vdots  \\
        &  & G_0       
    \end{pmatrix}\in\mathbb{F}^{(j+1+\mu)k\times n(j+1)}
\end{equation}

\begin{definition}
    Given $n,k,j$ and $\mu$ integers with $k< n$, the set\\ $(\ell_1,\ell_2,\dots,\ell_{(j+1+\mu)k})$ with $\ell_i\in\{1,\dots,(j+1)n\}$ and $\ell_i<\ell_s$ for $i<s$, is called a non-trivial set of $(n,k,j,\mu)$-generator indices if there exists $G(z)=\sum_{i=0}^\mu G_iz^i\in\mathbb{F}[z]^{k\times n}$ such that the full-size minor of $\mathcal{G}^c_j$ constituted by the columns with indices $\ell_1,\ell_2,\dots,\ell_{(j+1+\mu)k}$ is different from zero. A full-size minor of $\mathcal{G}^c_j$ formed by a non-trivial set of $(n,k,j,\mu)$-generator indices is then called non-trivial full-size minor of $\mathcal{G}^c_j$.
\end{definition}

For the recovery of $u_i$ with delay $j$ without assuming that $u_t$ with $t<i$ are known (i.e. to compute a guard space), the equations we need to consider have the following form:
\begin{align}\label{smallcomplete}
    ( u_{i-\mu} \cdots u_i \cdots u_{i+j})\left(\mathcal{G}_j^{c}\right)^r=(v_i \cdots v_{i+j})^{r},
\end{align}
where the superscript $r$ indicates that we only consider the columns of the matrix that correspond to the correctly received components. If $\left(\mathcal{G}_j^{c}\right)^r$
has full row rank, it is possible to recover $u_{i-\mu}, \dots, u_{i+j}$. Of course, this is only possible if $(j+1+\mu)k\leq n(j+1)$ and there are not more than $n(j+1)-(j+1+\mu)k=(n-k)(j+1)-\mu k$ erasures in $v_i, \dots, v_{i+j}$.

If the system in (\ref{smallcomplete}) cannot be solved, we can take a larger window and consider the equations:
\begin{equation}\label{form:bigSystem}
    ( u_{i-2\mu} \cdots u_i \cdots u_{i+j})\left(\mathcal{G}_{\mu+j}^{c}\right)^r=(v_{i-\mu} \cdots v_{i+j})^{r}.
\end{equation}

Using these equations, we can correct up to $(n-k)(j+1)+(n-2k)\mu$ erasures in a window of size $n(j+1+\mu)$.
The optimal codes for this kind of decoding  have the property that all non-trivial full-size minors of $\left(\mathcal{G}_{\mu+j}^{c}\right)^r$ are nonzero. Note that this is only possible if $G_{\mu}$ has full row rank, which in turn is only possible if $k\mid\delta$.

In light of the previous considerations, we extend the definition of complete $j$-MDP codes to $(n,k,\delta)$ convolutional codes with $k\mid\delta$ using the generator matrices of the codes.

\begin{definition}\label{Def:Complete-j-MDP-New}
An $(n,k,\delta)$ convolutional code $\mathcal{C}$ is called complete $j$-MDP,
\begin{enumerate}
    \item[(a)] if $(n-k)\mid\delta$ and $\mathcal{C}$ possesses a parity-check matrix $H(z)=\sum_{i=0}^{\nu}H_iz^i$ where $\nu=\frac{\delta}{n-k}$ such that all non-trivial full-size minors of $\mathcal{H}^c_j$ are nonzero, \textbf{or}
    \item[(b)] if $k\mid\delta$ and $\mathcal{C}$ possesses a generator matrix $G(z)=\sum_{i=0}^{\mu}G_iz^i$ where $\mu=\frac{\delta}{k}$ such that all non-trivial full-size minors of $\mathcal{G}^c_{\mu+j}$ are nonzero.
\end{enumerate} 
A complete $L$-MDP convolutional code is also just called complete MDP convolutional code.
\end{definition}



\begin{theorem}\label{Prop:NotTriviallyZeroMinorsG}
   A full-size minor of $ \mathcal{G}^c_{\mu+j}$ formed by the columns $\ell_{1}<\dots<\ell_{(j+1+2\mu)k}$ is non-trivial if and only if
   \begin{enumerate}[(i)]
        \item $\ell_{sk}\leq sn$
        \item $\ell_{(\mu+s)k+1}>sn$
   \end{enumerate}
   for $s\in\left\lbrace1,2,\dots,j+\mu\right\rbrace$.
\end{theorem}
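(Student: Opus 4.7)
The plan is to prove the two implications separately. For necessity, the argument is purely structural. Block column $c$ of $\mathcal{G}^c_{\mu+j}$ has its nonzero blocks $G_\mu,G_{\mu-1},\dots,G_0$ sitting in block rows $c,c+1,\dots,c+\mu$ and is zero elsewhere. Consequently, the first $s$ block columns (i.e.\ columns $1,\dots,sn$) are supported on the first $(s+\mu)k$ rows, while block columns $s,\dots,j+\mu$ (columns $sn+1,\dots,(j+1+\mu)n$) are supported on the last $(j+1+2\mu-s)k$ rows. If (ii) fails at some $s$, more than $(s+\mu)k$ of the chosen columns lie among the first $s$ block columns and are necessarily linearly dependent; if (i) fails, more than $(j+1+2\mu-s)k$ chosen columns lie in block columns $\ge s$ and are dependent for the symmetric reason. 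Either way the minor vanishes for every $G(z)$.

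For sufficiency, fix indices $\ell_1<\cdots<\ell_{(j+1+2\mu)k}$ satisfying (i) and (ii) and view the entries of $G_0,\dots,G_\mu$ as algebraically independent indeterminates over the base field. The selected minor becomes a polynomial $P$ in those $(\mu+1)kn$ variables, and non-triviality amounts to $P\not\equiv0$; over a sufficiently large field this yields a concrete $G(z)$ realizing a nonzero minor. My first step toward $P\not\equiv0$ would be to build a perfect matching in the bipartite graph whose left vertices are the $(j+1+2\mu)k$ rows of the selected submatrix, whose right vertices are the $(j+1+2\mu)k$ selected columns, and where a row in block row $b$ is joined to a column in block column $c$ exactly when $c\le b\le c+\mu$, i.e.\ the matrix entry lies in one of the $G_i$ blocks. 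Writing $N_s$ for the number of selected columns lying in block columns $\{0,\dots,s-1\}$, condition (ii) becomes $N_s\le(s+\mu)k$ (the column-to-row Hall inequality on the first $s$ block columns) and condition (i) becomes $N_s\ge sk$ (the same inequality read from the complementary block columns $\{s,\dots,j+\mu\}$), so together they give Hall's condition and the existence of a perfect matching.

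The main obstacle is that $\mathcal{G}^c_{\mu+j}$ has a Toeplitz structure, so each indeterminate entry of $G_i$ reappears throughout the matrix and a single perfect matching only witnesses one nonzero monomial in the Leibniz expansion of $P$; distinct matchings could in principle contribute equal monomials with opposite signs and annihilate $P$. I expect to bypass this exactly as for the parity-check analogue Lemma~\ref{lemma:nottriviallyzeroH} in \cite{AL:j-MDPComplete}: rather than arguing inside the formal polynomial ring, specialize $G_0,\dots,G_\mu$ so that $\mathcal{G}^c_{\mu+j}$ embeds as a submatrix of a sufficiently large superregular (or Cauchy-type) matrix over a large field, whose nonvanishing minors are combinatorially characterized by exactly the pattern (i)--(ii). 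That specialization produces a concrete numerical realization with the chosen minor nonzero, shows $P\not\equiv0$, and completes sufficiency.
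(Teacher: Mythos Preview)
Your necessity argument is essentially the paper's: both observe that block column $c$ is supported on block rows $c,\dots,c+\mu$, so violating (ii) places more than $(s+\mu)k$ chosen columns inside the first $sn$ (supported on only $(s+\mu)k$ rows), and violating (i) does the symmetric thing at the other end.

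For sufficiency the two routes diverge. The paper's argument is extremely terse: it simply rephrases (i) and (ii) as upper bounds on how many chosen columns can lie in each tail of block columns and asserts that this ``ensures'' the selection can be complemented to a non-trivial minor, without exhibiting any $G(z)$ or explaining why the formal determinant does not vanish identically. Your approach is more explicit and more rigorous. You correctly identify (i)--(ii) as exactly Hall's condition on the bipartite graph of nonzero block positions (the bounds $sk\le N_s\le(\mu+s)k$ indeed yield Hall for every interval of block columns, hence a perfect matching), and you rightly flag that a single matching does not immediately give $P\not\equiv0$ because the Toeplitz repetition of the $G_i$ could in principle make distinct permutations contribute identical monomials. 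Your resolution via specialization to a concrete family over a large field is valid and is precisely what the paper itself does later (the construction based on \cite{dr16}), so your proof is a legitimate completion of what the paper leaves implicit. One wording quibble: since $\mathcal{G}^c_{\mu+j}$ has structural zeros it cannot literally sit inside a superregular matrix; what you need, and what the \cite{dr16} lemma delivers, is the weaker property that every minor possessing at least one nonzero Leibniz term is itself nonzero.
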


\begin{proof}
 To obtain a non-trivial minor it is necessary to pick at least $sk$ columns out of the first $sn$ columns of $\mathcal{G}^c_{\mu+j}$ for $s\in\left\lbrace1,2,\dots,j+\mu\right\rbrace$ because the last $(\mu+j+1)n-sn$ columns have only zeros in the first $sk$ rows, i.e. condition (i) is necessary. In a similar way we can conclude that we need to pick at least $sk$ out of the last $sn$ columns because the other columns have the last $sk$ rows equal to zero, which shows that also condition (ii) is necessary to obtain a non-trivial minor. 
 
On the other hand, condition (i) ensures that we cannot pick more than $(j+1+2\mu-s)k$ columns from the last $(j+1+\mu-s)n$ columns of $\mathcal{G}^c_{\mu+j}$, which ensures that these chosen $(j+1+2\mu-s)k$ columns can be complemented to a non-trivial full-size minor of $\mathcal{G}^c_{\mu+j}$. Similarly, condition (ii) ensures that we take at most $(\mu+s)k$ columns out of the first $sn$ columns, which ensures that these chosen $(\mu+s)k$ columns can be complemented to a non-trivial full-size minor of $\mathcal{G}^c_{\mu+j}$.
\end{proof} 

With the preceding theorem we obtain that the requirements on the error pattern which 
can be corrected by a complete $j$-MDP convolutional code are given by the following theorem, which is an analogue of Theorem \ref{Th:CompleteMRD_guardSpace} for the case when the generator matrix is used.

\begin{theorem}\label{Th:CompleteMRD_guardSpace_Generator}
    If $\mathcal{C}$ is a complete $j$-MDP convolutional code as in Definition \ref{Def:Complete-j-MDP-New}(b), i.e. in particular $k\mid\delta$, then it has the following decoding properties:
    \begin{enumerate}[(i)]
        \item If in any sliding window of length $(j+1)n$ at most $(j+1)(n-k)$ erasures occur, complete forward recovery is possible.
        \item  If in a window of size $(\mu+j+1)n$ there are not more than $(n-k)(j+1)+(n-2k)\mu$ erasures, and if they are distributed in such a way that between positions $1$ and $sn$ and between positions $(\mu+j+1)n$ and $(\mu+j+1)n-sn+1$
        for $s=1,2,\dots,j+\mu$, there are not more than $s(n-k)$ erasures, then full correction of all symbols in this interval will be possible by using the generator matrix.
    \end{enumerate}
\end{theorem}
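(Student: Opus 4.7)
The plan is to handle (i) and (ii) separately, both leveraging the non-triviality characterization of Theorem~\ref{Prop:NotTriviallyZeroMinorsG}.

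For (i), I would first show that the complete $j$-MDP hypothesis of Definition~\ref{Def:Complete-j-MDP-New}(b) forces $d_i^c=(n-k)(i+1)+1$ for every $i\leq j$; once this distance condition is in hand, (i) follows immediately from Theorem~\ref{th:GeneratorDecoding}. By the second part of Theorem~\ref{Th:SingletonColumn} it suffices to establish the bound for $i=j$, and by Theorem~\ref{Th:SingBoundcolumnGenerator} this amounts to proving that every non-trivial full-size minor of $G_j^c$ is nonzero. The structural key is that $G_j^c$ appears inside $\mathcal{G}^c_{\mu+j}$ as the submatrix formed by the last $(j+1)k$ rows and the last $(j+1)n$ columns, while these bottom rows are identically zero in the first $\mu n$ columns. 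Given a non-trivial index set $\tau_1<\cdots<\tau_{(j+1)k}$ of $G_j^c$, I would construct a non-trivial full-size index set for $\mathcal{G}^c_{\mu+j}$ by taking the shifted indices $\tau_i+\mu n$ together with a carefully chosen collection of $2\mu k$ columns from the first $\mu n$ positions. By the block structure (zero bottom-left corner), the resulting $(j+1+2\mu)k\times(j+1+2\mu)k$ minor factors as the determinant of an upper-left past-block times the $G_j^c$ minor, so the nonvanishing of the whole minor (by the complete $j$-MDP hypothesis) forces the nonvanishing of the $G_j^c$ minor.

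For (ii), I would work directly with the system $(u_{i-2\mu},\dots,u_{i+j})\,\mathcal{G}^c_{\mu+j}=(v_{i-\mu},\dots,v_{i+j})$ restricted to the non-erased positions. A short arithmetic check, $(\mu+j+1)n-[(n-k)(j+1)+(n-2k)\mu]=(j+1+2\mu)k$, shows that the erasure bound leaves exactly as many non-erased positions as rows of $\mathcal{G}^c_{\mu+j}$, so the restricted coefficient matrix is square. The two burst-type distribution hypotheses translate directly into the two conditions of Theorem~\ref{Prop:NotTriviallyZeroMinorsG}: ``at most $s(n-k)$ erasures among the first $sn$ positions'' is equivalent to $\ell_{sk}\leq sn$, and, combined with the symmetric condition on the last $sn$ positions and the total count, yields $\ell_{(\mu+s)k+1}>sn$ for every $s\in\{1,\dots,j+\mu\}$. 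The restricted square matrix is therefore a non-trivial full-size minor of $\mathcal{G}^c_{\mu+j}$, nonzero by Definition~\ref{Def:Complete-j-MDP-New}(b); hence the system uniquely determines $(u_{i-2\mu},\dots,u_{i+j})$ and thus all codeword symbols in the window.

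The main obstacle I expect is in part (i): the $2\mu k$ extension indices must be placed so that both $\ell_{sk}\leq sn$ and $\ell_{(\mu+s)k+1}>sn$ hold for every $s\in\{1,\dots,j+\mu\}$. Since only $\mu n$ past columns are available to house $2\mu k$ new indices, the construction is delicate in the regime where $n$ is close to $2k$, and the freedom in choosing the extension columns has to be exploited according to the actual distribution of the $\tau_i$ inside the $G_j^c$ range in order to keep both inequalities simultaneously satisfied.
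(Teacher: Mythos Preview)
The paper does not give a standalone proof of this theorem; it simply states, just before the theorem, that the result follows from the characterization of non-trivial minors in Theorem~\ref{Prop:NotTriviallyZeroMinorsG}. Your treatment of (ii) is exactly the intended argument and is correct: the arithmetic $(\mu+j+1)n-[(n-k)(j+1)+(n-2k)\mu]=(j+1+2\mu)k$ makes the surviving columns a square submatrix, the two distribution hypotheses translate verbatim into conditions (i) and (ii) of Theorem~\ref{Prop:NotTriviallyZeroMinorsG}, and Definition~\ref{Def:Complete-j-MDP-New}(b) then forces that minor to be nonzero. (For strictly fewer erasures one just observes that the surviving columns contain such a square non-trivial minor.)

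For (i), your block-factorization idea---embedding $G_j^c$ as the bottom-right $(j+1)k\times(j+1)n$ block of $\mathcal{G}^c_{\mu+j}$ and extending a given non-trivial index set by $2\mu k$ columns taken from the first $\mu n$ positions---is precisely the argument the paper itself deploys a few lines later to prove that a complete MDP code is MDP. But note that there the paper adds the hypothesis $k\le n-k$, and for good reason: fitting $2\mu k$ extension indices inside only $\mu n$ positions requires $2\mu k\le \mu n$, i.e.\ $2k\le n$. Your remark that the construction becomes ``delicate'' near $n=2k$ understates the issue; for $n<2k$ there are simply not enough slots, and no amount of adapting the choice to the distribution of the $\tau_i$ can salvage the factorization. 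So your plan for (i) is correct and coincides with the paper's own method in the regime $k\le n-k$---which is exactly the range where Table~\ref{tab:RcoveringRatio} says the generator-matrix approach is the relevant one---while for $k>n-k$ a different argument would be needed, and the paper does not supply one for this theorem either.
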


From the previous theorem it is easy to see that also for  complete $j$-MDP convolutional codes forward
recovering rate and guard space recovering rate are different. On one hand, when forward recovery is possible as in case \textit{(i)}, the recovering rate is $R_i=\frac{(i+1)(n-k)}{(i+1)n}$ for $i\in\{0,1,\dots,j\}$. On the other hand, when computing a new guard space is necessary and we must invoke property \textit{(ii)} of Theorem \ref{Th:CompleteMRD_guardSpace_Generator}, we have a decreased recovering rate of $\hat{R}_{i+\mu}=\frac{(i+1)(n-k)+(n-2k)\mu}{(i+1+\mu)n}$ for $i\in\{0,1,\dots,j\}$. Table \ref{tab:RcoveringRatio} compares the recovering rates when computing a guard space using the generator matrix or the parity-check matrix. It shows that if $k<n-k$, then we can recover more erasures using the generator matrix and if $n-k<k$, we can recover more with the parity-check matrix and if $k=n-k$, the two recovering rates are the same. Note, however, that for these recovering rates, we need to assume $k\mid\delta$ when using $G(z)$ and $(n-k)\mid\delta$ when using $H(z)$.
\begin{table}[h]
    \centering
    {\renewcommand{\arraystretch}{2}
    \begin{tabular}{ccc}
                 & With $G(z)$ and $k\mid\delta$                         & With $H(z)$ and $(n-k)\mid\delta$\\
    \cmidrule{2-3}
    Guard space recovering rate    & {$\frac{(n-k)(j+1)+(n-2k)\mu}{(j+1+\mu)n}$} &{ $\frac{(n-k)(j+1)}{(j+1+\nu)n}$}\\
    \cmidrule{2-3}
    Best parameter range     & {$k\leq n-k$}                                  & {$k\geq n-k$}\\
    \cmidrule{2-3}
    \end{tabular}
    }
    \caption{Recovering rate comparison}
    \label{tab:RcoveringRatio}
\end{table}

\begin{example}
Consider a $(3,1,18)$ complete MDP convolutional code over an erasure channel with $\mu=\delta/k=18$. Since $L=27$, we have that its forward recovering rate is $R_L=\frac{(L+1)(n-k)}{(L+1)n}=\frac{56}{84}$. Now, we suppose we received the following messages:
    \begin{multline*}
    \cdots\star\star|\overbrace{\star\star\dots\star}^{20}\overbrace{\bfv\bfv\cdots\bfv}^{11}\overbrace{\star\star\cdots\star}^{15}\overbrace{\bfv\bfv\cdots\bfv}^{19}\overbrace{\star\star\dots\star}^{19}|\\
    |\overbrace{\bfv\bfv\bfv}^{3}\overbrace{\star\star\cdots\star}^{16}\overbrace{\bfv\bfv\cdots\bfv}^{8}\overbrace{\star\star\dots\star}^{24}\overbrace{\bfv\bfv\cdots\bfv}^{23}\overbrace{\star\star\dots\star}^{10}|\\
    |\overbrace{\star\star\dots\star}^{17}\overbrace{\bfv\bfv\cdots\bfv}^{23}\overbrace{\star\star\cdots\star}^{21}\overbrace{\bfv\bfv\cdots\bfv}^{4}\overbrace{\star\star\dots\star}^{19}|\dots
    \end{multline*}
    Since we do not have a window of at least $54$ correctly received symbols (i.e. a guard space), we cannot use the forward decoding process with MDP convolutional codes. It is easy to see that there is no window of size $84$ with at most $38$ erasures, so we cannot decode as in formula (\ref{smallcomplete}). When enlarging the considered window the recovering rate is reduced to $\hat{R}_{138}=\frac{74}{138}$ (see top left formula in Table \ref{tab:RcoveringRatio}). Since the $(L+1+\mu)n=138$-size window 
    \begin{multline*}
    \cdots\overbrace{\bfv\bfv\cdots\bfv}^{11}\overbrace{\star\star\dots\star}^{15}\overbrace{\bfv\bfv\cdots\bfv}^{19}\overbrace{\star\star\cdots\star}^{19}|\overbrace{\bfv\bfv\bfv}^{3}\overbrace{\star\star\cdots\star}^{16}\overbrace{\bfv\bfv\cdots\bfv}^{8}\overbrace{\star\star\dots\star}^{24}\overbrace{\bfv\bfv\cdots\bfv}^{23}\dots
    \end{multline*}
    has $74$ erasures (fulfilling condition (ii) of Theorem \ref{Th:CompleteMRD_guardSpace}), we can proceed with the recovering of the information by solving system \eqref{form:bigSystem}.
    As we now have a big enough window of correct information, we can decode the next bursts of length $10$, $17$ and $21$:
    \begin{equation*}
        \overbrace{\bfv\bfv\cdots\bfv}^{54}\overbrace{\star\star\dots\star}^{10}|\overbrace{\star\star\dots\star}^{17}\overbrace{\bfv\bfv\cdots\bfv}^{23}\overbrace{\bfv\bfv\cdots\bfv}^{21}\cdots
    \end{equation*}
    by using the same method as in Example \ref{Ex:MDP1} (forward correction). At the end of the recovering process we were not able to recover the first ($\cdots|\overbrace{\star\star\dots\star}^{20}\cdots$) burst of erasures and the decoding of the last ($\cdots\overbrace{\star\star\dots\star}^{19}|\cdots$) burst of erasures will depend on how the next symbols arrive.
\end{example}

The following lemma is a straight-forward extension of \cite[Lemma~3]{AL:j-MDPComplete} using the extended definition of complete $j$-MDP code as in Definition \ref{Def:Complete-j-MDP-New}.

\begin{lemma}\cite{AL:j-MDPComplete}
    If for some $j\in\{0,1,\dots,L\}$, $\mathcal{C}$ is a complete $j$-MDP convolutional code, then $\mathcal{C}$ is also a complete $i$-MDP convolutional code for all $i\leq j$.
\end{lemma}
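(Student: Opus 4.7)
The lemma splits into the two cases of Definition \ref{Def:Complete-j-MDP-New}. Case (a) is exactly \cite[Lemma 3]{AL:j-MDPComplete}, whose proof transfers without change; the only new content is case (b). In that case I must show that if $k\mid\delta$ and every non-trivial full-size minor of $\mathcal G^c_{\mu+j}$ is nonzero, then the same holds for $\mathcal G^c_{\mu+i}$ for every $i\leq j$.

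The key structural observation I plan to use is that $\mathcal G^c_{\mu+j}$ admits the block-triangular decomposition
\[
\mathcal G^c_{\mu+j}=\begin{pmatrix}\mathcal G^c_{\mu+i} & A\\ 0 & B\end{pmatrix},
\]
where $B$ is of size $(j-i)k\times(j-i)n$. The zero bottom-left block is immediate from the definition of $\mathcal G^c_{\mu+j}$: the $c$-th block column has nonzero entries confined to block rows $c,\dots,c+\mu$, so the first $i+1+\mu$ block columns vanish in every block row above $i+1+2\mu$.

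Given an arbitrary non-trivial index set $\ell_1<\cdots<\ell_{(i+1+2\mu)k}\subset\{1,\dots,(i+1+\mu)n\}$ for $\mathcal G^c_{\mu+i}$, my plan is to augment it by appending the last $k$ columns of each of the $j-i$ new block columns, i.e.\ the $(j-i)k$ indices of the form $cn-k+1,\dots,cn$ for $c=i+2+\mu,\dots,j+1+\mu$. Since the original indices lie in old columns and the appended ones in new columns, the square submatrix of $\mathcal G^c_{\mu+j}$ picked out by the augmented set is again block upper-triangular, with determinant $\det(M)\cdot\det(B')$, where $M$ is the target minor of $\mathcal G^c_{\mu+i}$ and $B'$ is a square submatrix of $B$.

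The main obstacle, and essentially the only technical step, is to verify that the augmented index set $\tilde\ell_1<\cdots<\tilde\ell_{(j+1+2\mu)k}$ is non-trivial for $\mathcal G^c_{\mu+j}$ in the sense of Theorem \ref{Prop:NotTriviallyZeroMinorsG}, i.e.\ that $\tilde\ell_{sk}\leq sn$ and $\tilde\ell_{(\mu+s)k+1}>sn$ hold for all $s=1,\dots,j+\mu$. This separates cleanly into two ranges: for $s\leq i+\mu$ the inequalities reduce immediately to the non-triviality of $\{\ell_s\}$, while for $s>i+\mu$ they become a short arithmetic check exploiting that the appended columns are placed at the tail of each new block column (so $\tilde\ell_{sk}=(s-\mu)n$ and $\tilde\ell_{(\mu+s)k+1}=sn+n-k+1$). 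Once non-triviality is in place, the hypothesis forces $\det(M)\det(B')\neq 0$, hence $\det(M)\neq 0$, completing the proof.
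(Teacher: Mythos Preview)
Your approach is correct. The paper itself does not supply a proof of this lemma; it simply records it as ``a straight-forward extension of \cite[Lemma~3]{AL:j-MDPComplete} using the extended definition'', so your argument for case~(b) is actual new content relative to the paper rather than a comparison point. The block upper-triangular splitting of $\mathcal G^c_{\mu+j}$ and the augmentation by the last $k$ columns of each new block column is exactly the natural way to embed a non-trivial minor of $\mathcal G^c_{\mu+i}$ into one of $\mathcal G^c_{\mu+j}$, and the factorisation $\det(M)\det(B')$ then does the work.

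One small imprecision worth fixing when you write it out: the formula $\tilde\ell_{sk}=(s-\mu)n$ only holds once $s>i+1+2\mu$, not for the whole range $s>i+\mu$. For the intermediate values $i+\mu<s\le i+1+2\mu$ the index $sk$ still falls among the original $\ell$'s, and condition~(i) is instead verified by the trivial bound $\tilde\ell_{sk}=\ell_{sk}\le(i+1+\mu)n\le sn$. This does not affect the validity of the argument, but the three-range split $(s\le i+\mu$; $i+\mu<s\le i+1+2\mu$; $s>i+1+2\mu)$ for condition~(i) is what you will actually need when filling in the details.
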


Note that this lemma implies that complete $j$-MDP codes are also optimal for decoding using equation \eqref{smallcomplete}. 

Moreover, when not considering the delay in the decoding process, the optimal codes are those where all non-trivial full-size minors of
$$\mathcal{G}^c_{L+\mu}=\begin{pmatrix}
        G_{\mu} & & \\
                \vdots & \ddots & \\
                       G_0 &  & G_\mu \\
        & \ddots   & \vdots  \\
        &  & G_0       
    \end{pmatrix}\in\mathbb F^{k(L+1+2\mu)\times n(L+1+\mu)}$$
are nonzero, i.e. the complete MDP convolutional codes. 

It is important to notice that complete $j$-MDP codes are in general not MDP.
However, in \cite{TRS:Decod} it is proved that every complete MDP code defined as in Definition \ref{Def:CompletejMDP} with $j=L$ is an MDP code. Next, we will show that this is also true for our new complete MDP codes defined via the generator matrix, in case that $k\leq n-k$, which is the parameter range where the codes defined via the generator matrix perform better; see Table \ref{tab:RcoveringRatio}.

\begin{theorem}
    Each complete MDP convolutional code with $k\mid\delta$ and $k\leq n-k$, or $(n-k)\mid\delta$  is MDP.
\end{theorem}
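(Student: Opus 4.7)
My plan is to handle the two cases separately. The case $(n-k)\mid\delta$ is already proved in \cite{TRS:Decod}, so I would cite it and concentrate on the new case, $k\mid\delta$ with $k\leq n-k$. For that case it suffices by Theorem \ref{Th:SingBoundcolumnGenerator} to prove that every non-trivial full-size minor of $G_L^c$ is nonzero, so I fix column indices $1\leq t_1<\cdots<t_{(L+1)k}\leq (L+1)n$ with $t_{sk+1}>sn$ for $s=1,\dots,L$ and aim to show the corresponding minor is nonzero.

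The strategy is to realise $G_L^c$ as the submatrix of $\mathcal{G}_{L+\mu}^c$ sitting on the rows indexed by $u_{t+\mu},\dots,u_{t+L+\mu}$ and on the columns with indices $\mu n+1,\dots,(L+1+\mu)n$. Under this identification, $\mu n+t_1,\dots,\mu n+t_{(L+1)k}$ are columns of $\mathcal{G}_{L+\mu}^c$, and I would extend them to a full column set of size $(L+1+2\mu)k$ by appending the ``block-balanced'' set $c=\bigcup_{b=0}^{\mu-1}\{bn+1,\dots,bn+2k\}$, which fits inside the first $\mu n$ columns exactly because $k\leq n-k$ forces $2k\leq n$.

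The main (and essentially only) delicate step is to verify that the sorted union $\ell_1<\cdots<\ell_{(L+1+2\mu)k}$ of $c$ with $\{\mu n+t_i\}$ satisfies the two conditions of Theorem \ref{Prop:NotTriviallyZeroMinorsG} for all $s=1,\dots,L+\mu$. For $s\leq\mu$, exactly $2sk$ of the $\ell_i$ are $\leq sn$, so both $sk\leq 2sk$ and $2sk\leq(\mu+s)k$ are immediate. For $s>\mu$, setting $\tilde s=s-\mu$, the upper bound $\ell_{(\mu+s)k+1}>sn$ translates into the $G_L^c$-non-triviality $|\{t_i\leq\tilde s n\}|\leq\tilde s k$. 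The lower bound $\ell_{sk}\leq sn$ amounts to $|\{t_i\leq\tilde s n\}|\geq(\tilde s-\mu)k$ when $\tilde s>\mu$, and I would prove this by contradiction: otherwise more than $(L+1-\tilde s+\mu)k$ of the $t_i$ would have to lie in the $(L+1-\tilde s)n$ positions above $\tilde s n$, and a short computation using $L=\mu+\lfloor\delta/(n-k)\rfloor$ and $k\mid\delta$ forces $\tilde s\leq\mu$, a contradiction. I expect this pigeonhole estimate to be the only really calculational point in the proof.

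Once $\ell$ is non-trivial, the complete MDP hypothesis ensures that the corresponding full-size minor $M$ of $\mathcal{G}_{L+\mu}^c$ is nonzero. I would then exploit the following vanishing: the first $\mu$ row blocks of $\mathcal{G}_{L+\mu}^c$ are identically zero on every column past $\mu n$, while the last $L+1$ row blocks are zero on every column up to $\mu n$. Ordering the rows into the three groups (first $\mu k$, next $\mu k$, last $(L+1)k$) and the columns as $c$ followed by $\{\mu n+t_i\}$, the selected minor takes the block form
\begin{equation*}
\begin{pmatrix} P_1 & 0 \\ P_2 & P_3 \\ 0 & P_4 \end{pmatrix},
\end{equation*}
where $P_4$ is exactly the $G_L^c$-minor on $(t_i)$ under scrutiny. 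A generalized Laplace expansion along the leftmost $2\mu k$ columns collapses to a single nonzero term, since any row selection that dips into the last $(L+1)k$ rows yields a zero row in the left block, and gives $M=\pm\det\begin{pmatrix}P_1\\ P_2\end{pmatrix}\det P_4$. As $M\neq 0$, both factors are nonzero, and in particular $\det P_4\neq 0$, which completes the proof.
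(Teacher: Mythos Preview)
Your proposal is correct and follows essentially the same route as the paper: embed $G_L^c$ as the lower-right block of $\mathcal{G}^c_{L+\mu}$, complement any non-trivial $G_L^c$ column set by suitably chosen columns from the first $\mu n$ positions, check that the union is a non-trivial index set for $\mathcal{G}^c_{L+\mu}$, and then factor the resulting full-size minor using block-triangularity. The paper chooses the extra $2\mu k$ columns abstractly as any ``non-trivial full-size minor'' of the upper-left $2\mu k\times \mu n$ block and asserts $\det(C)=\det(A)\det(B)$ directly from block upper-triangularity, without spelling out the non-triviality check; you instead fix the explicit choice $c=\bigcup_{b=0}^{\mu-1}\{bn+1,\dots,bn+2k\}$ and carry out the verification of conditions (i)--(ii) of Theorem~\ref{Prop:NotTriviallyZeroMinorsG} in full, including the pigeonhole step. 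Both approaches need $2k\leq n$ (equivalently $k\leq n-k$) so that $2\mu k$ columns fit in the first $\mu n$ positions; the paper uses this implicitly, you use it explicitly. One minor simplification: your generalized Laplace expansion is unnecessary, since the selected submatrix is block upper-triangular with diagonal blocks $\begin{pmatrix}P_1\\P_2\end{pmatrix}$ and $P_4$, giving $M=\det\begin{pmatrix}P_1\\P_2\end{pmatrix}\det P_4$ immediately.
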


\begin{proof}
    For $(n-k)\mid\delta$, the statement is known; see \cite{TRS:Decod}. So let $k\mid\delta$, $G(z)=\sum_{i=0}^{\mu}G_iz^i\in\mathbb{F}[z]^k$ be a row reduced generator matrix of $\mathcal{C}$. Then $\mu=\frac{\delta}{k}$ and $L\geq\mu$. Assume that all non-trivial full-size minors of 
    \begin{equation}
 \mathcal{G}^c_{\mu+L}=\begin{pmatrix}
        G_{\mu} & &  &\vline & & & &  &  \\
                \vdots & \ddots & &\vline\\
                       G_0 & \cdots & G_\mu & \vline\\
        & \ddots   & \vdots & \vline &  G_\mu \\
        &  & G_0   & \vline  & \vdots   & \ddots \\
        \hline  
        & & & \vline & G_0 & \cdots & G_\mu & \cdots & G_L\\
         & & & \vline & & \ddots & \vdots & \ddots & \vdots\\
          &  & & \vline & & & G_0 & \cdots & G_\mu\\
           & & & \vline & & & & \ddots & \vdots \\
            & & & \vline &  & & & & G_0
    \end{pmatrix}\in\mathbb F^{k(L+1+2\mu)\times n(L+1+\mu)}
\end{equation}
are nonzero. We choose $k(L+1+2\mu)$ out of the $n(L+1+\mu)$ columns of $\mathcal{G}^c_{\mu+L}$ forming a matrix $C$ as follows: first, we choose $2\mu k$ out of the first $n\mu$ columns building a matrix $A$ such that it corresponds to a non-trivial full-size minor of 
$$
    \begin{pmatrix}G_{\mu} & & 0 \\
                \vdots & \ddots & \\
                       G_0 & \cdots & G_\mu \\
        & \ddots   & \vdots \\
     0   &  & G_0   
     \end{pmatrix}\in\mathbb F^{2\mu k\times n\mu}.
$$ 
Second, we pick $k(L+1)$ out of the last $n(L+1)$ columns of $\mathcal{G}^c_{\mu+L}$ forming a matrix $B$ such that it corresponds to a non-trivial full-size minor of 
$$
\mathcal{G}^c_L=\begin{pmatrix} 
         G_0 & \cdots & G_\mu  & \cdots & G_L    \\
             & \ddots & \vdots & \ddots & \vdots \\
             &        & G_0    & \cdots & G_\mu  \\
             &        &        & \ddots & \vdots \\
             &        &        &        & G_0
    \end{pmatrix}.
$$
Now, we have that $\det(C)=\det(A)\cdot\det(B)$ and since $\det(C)$ and $\det(A)$ are nonzero, also $\det(B)$ is nonzero. Consequently, all non-trivial full-size minors of $\mathcal{G}^c_L$ are nonzero, i.e. the corresponding code is MDP.
\end{proof}

\begin{corollary}
    Each complete MDP convolutional code with $k\mid\delta$ and $k\leq n-k$, or $(n-k)\mid\delta$ is non-catastrophic.
\end{corollary}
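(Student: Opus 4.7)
The plan is to split into the two cases of Definition~\ref{Def:Complete-j-MDP-New}. Case (a), when $(n-k)\mid\delta$, is immediate: the code has a parity-check matrix by definition, so Theorem 1.2.8 of \cite{LPR-ConvCod} directly gives non-catastrophic.

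For the substantive case (b), with $k\mid\delta$ and $k\leq n-k$, the plan is first to invoke the preceding theorem to conclude that $\mathcal{C}$ is MDP, and then argue that any delay-free MDP code must be non-catastrophic. Since non-catastrophic is equivalent to left primeness of $G(z)=\sum_{i=0}^{\mu}G_iz^i$, the goal is to show that $G(z_0)$ has full row rank $k$ for every $z_0\in\bar{\mathbb F}$. The case $z_0=0$ is handled by delay-freeness, which MDP already forces through $d_0^c=n-k+1$ making $G_0$ an MDS matrix. For $z_0\neq 0$ I would argue by contradiction: assume $u_0 G(z_0)=0$ for some nonzero $u_0\in\bar{\mathbb F}^k$.

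The key construction is to build the truncated formal power series expansion of $u_0/(z-z_0)$, namely $u'=(u'_0,u'_1,\dots,u'_L)\in\bar{\mathbb F}^{(L+1)k}$ with $u'_i=-u_0 z_0^{-(i+1)}$; note that $u'_0\neq 0$. A block-by-block computation of $u'\cdot G_L^c$ would show that every output block at position $s\geq\mu$ collapses, via $\sum_{j=0}^{\mu} z_0^j G_j = G(z_0)$ and $u_0 G(z_0)=0$, to zero. Thus $u'$ lies in the left null space of the submatrix $M$ of $G_L^c$ formed by its last $(L+1-\mu)n$ columns, and the task reduces to showing $M$ has full row rank $(L+1)k$, which would force $u'=0$ and yield the contradiction.

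The final and main step is to exhibit a nonzero $(L+1)k\times(L+1)k$ minor inside $M$. The plan is to pick the first $(L+1)k$ columns of $M$, i.e.\ columns $\mu n+1,\dots,\mu n+(L+1)k$ of $G_L^c$, and verify that these indices satisfy the non-trivial minor condition $t_{sk+1}>sn$ of Theorem~\ref{Th:SingBoundcolumnGenerator}. I expect this index-arithmetic verification to be the main (though routine) obstacle; it reduces to checking $s(n-k)\leq\mu n$ for $s=1,\dots,L$, which follows from $L=\mu+\lfloor\mu k/(n-k)\rfloor\leq\mu n/(n-k)$. Once this is established, the MDP property immediately yields that the chosen minor is nonzero, hence $M$ has full row rank, completing the argument.
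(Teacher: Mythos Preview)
Your proof follows exactly the same two-step reduction as the paper: case (a) is immediate from the existence of a parity-check matrix, and for case (b) both you and the paper invoke the preceding theorem to get MDP and then conclude non-catastrophic from MDP together with $k\mid\delta$. The only difference is that the paper delegates this last implication to an external citation \cite{AL:leftprime}, whereas you supply a direct self-contained argument via the truncated power-series vector $u'_i=-u_0 z_0^{-(i+1)}$ and a non-trivial full-size minor of $G_L^c$; your argument is correct (the key index check $L\leq \mu n/(n-k)$ goes through precisely because $k\mid\delta$ gives $L=\mu+\lfloor\mu k/(n-k)\rfloor$), so you are effectively reproving the cited lemma rather than taking a genuinely different route. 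One small phrasing point: you announce the goal as ``any delay-free MDP code must be non-catastrophic'', but your actual argument uses $k\mid\delta$ (through $\mu=\delta/k$ in the index arithmetic), which matches the cited result and is all that is needed here.
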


\begin{proof}

For $(n-k)\mid\delta$, the statement is known; see \cite{TRS:Decod}. Moreover, according to the previous theorem, each complete MDP convolutional code with $k\mid\delta$ and $k\leq n-k$ is MDP and each MDP convolutional code with $k\mid\delta$ is non-catastrophic; see \cite{AL:leftprime}.
\end{proof}

As mentioned before, for $k=n-k$, complete $j$-MDP convolutional codes defined with the parity-check matrix have the same erasure correcting properties as MDP convolutional codes defined via the generator matrix. Next, we will show that in this case the two criteria for a complete $j$-MDP convolutional code, coming from generator matrix and parity-check matrix, are equivalent.

\begin{lemma}\label{Lemma:GjcompHj}
If $n-k=k$ and $\nu=\mu$, then the indices $1\leq \ell_1< \hdots< \ell_{k(j+1+2\mu)}\leq n(j+1+\mu)$ fullfil conditions (i) and (ii) of Theorem \ref{Prop:NotTriviallyZeroMinorsG} if and only if the indices of the set $\{i_1,\hdots,i_{(n-k)(j+1)}\}=\{1,\hdots, n(j+1+\mu)\}\setminus\{\ell_1, \hdots, \ell_{k(j+1+2\mu)}\}$
fullfil conditions (i) and (ii) of Lemma \ref{lemma:nottriviallyzeroH}. 
\end{lemma}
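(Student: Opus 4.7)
The plan is a direct complement-counting argument: since the two index sets are complementary in $\{1,\dots,n(j+1+\mu)\}$, inequalities about the positions of the $\ell_t$'s translate bijectively into inequalities about the positions of the $i_t$'s. First I would verify complementarity: with $n=2k$, one has $k(j+1+2\mu)+(n-k)(j+1)=k(j+1+2\mu)+k(j+1)=2k(j+1+\mu)=n(j+1+\mu)$, so $\{\ell_1,\dots,\ell_{k(j+1+2\mu)}\}$ and $\{i_1,\dots,i_{(n-k)(j+1)}\}$ partition $\{1,\dots,n(j+1+\mu)\}$. Writing $L(m)=|\{t:\ell_t\leq m\}|$ and $I(m)=|\{t:i_t\leq m\}|$, the identity $L(m)+I(m)=m$ drives everything that follows.

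I would then translate each condition of Theorem \ref{Prop:NotTriviallyZeroMinorsG} into a condition on the $i_t$'s. Condition (i), $\ell_{sk}\leq sn$, says $L(sn)\geq sk$, equivalently $I(sn)\leq sn-sk=(n-k)s$, which is exactly $i_{(n-k)s+1}>sn$, condition (i) of Lemma \ref{lemma:nottriviallyzeroH}. Condition (ii), $\ell_{(\mu+s)k+1}>sn$, says $L(sn)\leq(\mu+s)k$, equivalently $I(sn)\geq sn-(\mu+s)k=(s-\mu)(n-k)$; after the reindexing $s\mapsto s+\mu$ this reads $i_{(n-k)s}\leq n(s+\mu)=n(s+\nu)$, which is condition (ii) of Lemma \ref{lemma:nottriviallyzeroH}. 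Each step is a chain of equivalences, so both directions of the biconditional are handled simultaneously.

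The main obstacle is that Theorem \ref{Prop:NotTriviallyZeroMinorsG} quantifies $s$ over $\{1,\dots,j+\mu\}$ whereas Lemma \ref{lemma:nottriviallyzeroH} uses only $\{1,\dots,j\}$; I would reconcile this by showing that the extra instances are either vacuous or forced. For Theorem (i) with $s\geq j+1$, the translated condition $i_{sk+1}>sn$ refers to an index $sk+1>(j+1)k$ outside the range of the $i_t$'s and is therefore vacuous, while on the $\ell$-side $\ell_{sk}\leq sn$ is automatic because $L(sn)=sn-I(sn)\geq sn-(j+1)k\geq sk$ as soon as $s\geq j+1$. For Theorem (ii) with $s\leq\mu$, the translated index $(s-\mu)(n-k)$ is non-positive and the $i$-side inequality is vacuous, while the $\ell$-side inequality $\ell_{(\mu+s)k+1}>sn$ is forced by $L(sn)\leq sn=2sk\leq(\mu+s)k$ whenever $s\leq\mu$. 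Once these boundary cases are dispatched, the non-vacuous content of the two sets of conditions coincides, establishing the desired equivalence.
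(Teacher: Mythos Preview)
Your argument is correct. The counting identity $L(m)+I(m)=m$ gives a clean bijective dictionary between the two families of inequalities, and your handling of the range mismatch ($s\in\{1,\dots,j+\mu\}$ versus $s\in\{1,\dots,j\}$) is sound: the extra instances of Theorem~\ref{Prop:NotTriviallyZeroMinorsG}(i) with $s\ge j+1$ and of (ii) with $s\le\mu$ are indeed forced by the total count $|\{i_t\}|=k(j+1)$ and the trivial bound $L(sn)\le sn=2sk$, respectively.

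The paper proves the same lemma by four separate contradiction arguments (one for each condition in each direction), each time converting a violated $i$-inequality into a violated $\ell$-inequality by the same complement count you use. Your approach packages the identical arithmetic into a single equivalence chain via $L(\cdot)$ and $I(\cdot)$, which makes the biconditional immediate and isolates the boundary-range issue as a separate, explicit step; the paper's proof leaves that issue implicit (it only derives contradictions for $s\in\{1,\dots,j\}$ in one direction and does not explicitly verify the extra $s$-values in the other). So your write-up is a bit more transparent on the quantifier ranges, while the underlying mechanism is the same complement counting in both proofs.
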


\begin{proof}
Let $\ell_1,\dots,\ell_{k(j+1+2\mu)}$    be such that the conditions (i) and (ii) of Theorem \ref{Prop:NotTriviallyZeroMinorsG} are fulfilled
    and suppose that $i_1,\dots,i_{(j+1)(n-k)}$ do not fullfil the conditions of Lemma \ref{lemma:nottriviallyzeroH}. Then, for some $s\in\{1,\dots,j\}$ we have either or both of the following conditions:
    \begin{enumerate}
        \item $i_{(n-k)s+1}\leq sn$,
        \item $i_{(n-k)s}>(s+\nu)n$.
    \end{enumerate}

    On one hand, if for some $s$ we have that $i_{(n-k)s+1}\leq sn$, then $i_{ks+1}\leq sn$ since $n=2k$. This implies that at most $sn-(ks+1)=ks-1$ indices from $\{\ell_1, \hdots, \ell_{k(j+1+2\mu)}\}$ can be contained in $\{1,\hdots,sn\}$, i.e.
    $\ell_{ks}>sn$, which is a contradiction to condition (i) of Theorem \ref{Prop:NotTriviallyZeroMinorsG}.

    On the other hand, if $i_{(n-k)s}>(s+\mu)n$, we obtain $i_{ks}>(s+\mu)n$. This implies that at least $(s+\mu)n-(ks-1)=(s+2\mu)k+1$ indices from the set $\{\ell_1, \hdots, \ell_{k(j+1+2\mu)}\}$     have to be contained in $\{1,\hdots,(s+\mu)n\}$, i.e. $\ell_{(2\mu+s)k+1}\leq (s+\mu)n$ for some $s\in\{1,\dots,j\}$, which is equivalent to $\ell_{(\mu+\tilde{s})k+1}\leq \tilde{s}n$ for $\tilde{s}=s+\mu\in\{\mu+1,\hdots,\mu+j\}$ contradicting condition (ii) of Theorem \ref{Prop:NotTriviallyZeroMinorsG}.
        
Assume now that $i_1,\dots,i_{(j+1)(n-k)}$ are such that the conditions (i) and (ii) of Lemma \ref{lemma:nottriviallyzeroH} are fulfilled and suppose that    $\ell_1,\dots,\ell_{k(j+1+2\mu)}$ do not fullfil the conditions of Theorem \ref{Prop:NotTriviallyZeroMinorsG}. Then, for some $s\in\{1,\dots,j+\mu\}$ we have either or both of the following conditions:
    \begin{enumerate}
        \item $\ell_{sk}> sn$,
        \item $\ell_{(\mu+s)k+1}\leq sn$.
    \end{enumerate}

    If for some $s$ we have that $\ell_{sk}> sn$, then
    at least $sn-ks+1=ks+1$ indices from $i_1,\dots,i_{(j+1)(n-k)}$ have to be contained in $\{1,\hdots,sn\}$, i.e.
    $i_{ks+1}\leq sn$, which is a contradiction to condition (i) of Lemma \ref{lemma:nottriviallyzeroH}.
  
If $\ell_{(\mu+s)k+1}\leq sn$,
     at most $sn-((\mu+s)k+1)=sk-\mu k-1$ indices from the set $i_1,\dots,i_{(j+1)(n-k)}$
     can be contained in $\{1,\hdots,sn\}$, i.e. $i_{sk-\mu k}>sn$. Clearly, this is only possible if $s> \mu$ and we can set $\tilde{s}:=s-\mu\in\{1,\hdots,j\}$. Hence, we obtain $i_{k\tilde{s}}>(\tilde{s}+\mu)n$, a contradiction to condition (ii) of Lemma \ref{lemma:nottriviallyzeroH}.
\end{proof}

\begin{lemma}\cite[Lemma 8]{complementaryMinors}\label{Lem:ComplementaryMinors}
    Let $A\in\mathbb{F}^{m\times (m+\ell)}$ and $B\in\mathbb{F}^{\ell\times (\ell+m)}$ with $\ell, m\in\mathbb N$ be full row rank matrices such that $AB^\top={\bf 0}$. Given a full-size minor of $A$ constituted by the columns $i_1,i_2,\dots,i_m$, let us define the complementary full size minor of $B$ as the minor constituted by the complementary columns, i.e., by the columns $\{1,2,\dots,m+\ell\}\setminus\{i_1,i_2,\dots,i_m\}$. Then, the full size minors of $A$ are equal to the complementary full size minors of $B$ up to multiplication by a nonzero constant.
\end{lemma}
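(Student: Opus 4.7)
The plan is to reduce $A$ and $B$ simultaneously to a canonical block form by invertible row operations together with a column permutation, and then verify the identity by a Laplace expansion on both sides. Since $A$ has full row rank $m$, some $m$ of its columns are linearly independent, and by simultaneously permuting the columns of $A$ and $B$ (which merely relabels $\{1,\dots,m+\ell\}$ and preserves the complement-pairing $I\leftrightarrow I^c$) we may assume these are the first $m$ columns of $A$. Left-multiplying $A$ by the inverse of the resulting invertible $m\times m$ block scales every full-size minor of $A$ by a single nonzero constant, so without loss of generality $A=[\,I_m\mid A'\,]$ for some $A'\in\mathbb{F}^{m\times\ell}$. Writing $B=[\,B_1\mid B_2\,]$ with $B_2\in\mathbb{F}^{\ell\times\ell}$, the hypothesis $AB^{\top}=0$ forces $B_1=-B_2(A')^{\top}$, and the full row rank of $B$ forces $B_2$ to be invertible; left-multiplying $B$ by $B_2^{-1}$ rescales every full-size minor of $B$ by another nonzero constant. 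We may therefore assume
\[
A = [\,I_m \mid A'\,], \qquad B = [\,-(A')^{\top} \mid I_\ell\,].
\]

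Now fix an $m$-subset $I\subseteq\{1,\dots,m+\ell\}$ and write $I'=I\cap\{1,\dots,m\}$ together with $J=(I\cap\{m+1,\dots,m+\ell\})-m$. Expanding $\det(A_I)$ by cofactors along the standard-basis columns $\{e_i:i\in I'\}$ of $A_I$ eliminates the rows in $I'$ and yields
\[
\det(A_I)\;=\;\pm\,\det\bigl(A'_{\{1,\dots,m\}\setminus I',\,J}\bigr),
\]
where $A'_{S,T}$ denotes the submatrix of $A'$ with rows $S$ and columns $T$. An entirely analogous expansion of $\det(B_{I^c})$ along the standard-basis columns of $B$ at the positions $\{m+1,\dots,m+\ell\}\setminus(J+m)$ gives
\[
\det(B_{I^c})\;=\;\pm\,\det\bigl((A')^{\top}_{J,\,\{1,\dots,m\}\setminus I'}\bigr)\;=\;\pm\,\det\bigl(A'_{\{1,\dots,m\}\setminus I',\,J}\bigr).
\]
Thus $\det(A_I)$ and $\det(B_{I^c})$ share the same square submatrix of $A'$ (up to sign) as their only nontrivial factor, which forces them to vanish simultaneously and to agree up to a nonzero scalar. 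Tracking the two scaling constants from the initial reductions then yields the lemma.

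The main obstacle is the combinatorial sign bookkeeping in the two Laplace expansions: the signs relating $\det(A_I)$ and $\det(B_{I^c})$ to the common submatrix of $A'$ depend on $I$, and checking that their product coincides with the sign of the permutation rearranging $(I,I^c)$ into increasing order is the delicate step of the classical Plücker-duality argument. For the use in Lemma~\ref{Lemma:GjcompHj}, however, only the equivalence $\det(A_I)\neq 0\Leftrightarrow\det(B_{I^c})\neq 0$ is needed, and this is immediate from the shared factor identified above, independently of any sign analysis.
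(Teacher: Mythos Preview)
The paper does not give its own proof of this lemma; it is simply quoted from \cite{complementaryMinors} and used as a black box in the proof of Theorem~\ref{th:2keqn}. So there is no argument in the paper to compare yours against.

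That said, your proof is the standard one and is correct. The reduction to $A=[\,I_m\mid A'\,]$ and $B=[\,-(A')^{\top}\mid I_\ell\,]$ is valid: row operations on $A$ (resp.\ $B$) scale all full-size minors of $A$ (resp.\ $B$) by a single nonzero constant, a simultaneous column permutation preserves the complement pairing $I\leftrightarrow I^c$, and the factorisation $B=B_2[\,-(A')^{\top}\mid I_\ell\,]$ indeed forces $B_2$ to be invertible. The two Laplace expansions then identify both $\det(A_I)$ and $\det(B_{I^c})$ with $\pm\det\bigl(A'_{\{1,\dots,m\}\setminus I',\,J}\bigr)$, so the two minors vanish simultaneously.

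You are also right that the remaining content of the lemma---that the sign in front of this common minor, together with the signs introduced by the column permutation, combines to a constant \emph{independent of $I$}---is exactly the classical Pl\"ucker-duality sign computation, and that the paper never uses this finer statement: in Theorem~\ref{th:2keqn} only the equivalence $\det(A_I)\neq0\Leftrightarrow\det(B_{I^c})\neq0$ is invoked, and your argument establishes that directly.
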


\begin{theorem}\label{th:2keqn}
    Let $n=2k$, $k\mid\delta$ and let $\mathcal{C}$ be a non-catastrophic convolutional code with row reduced (left prime) generator matrix $G(z)=\sum_{i=0}^{\mu}G_iz^i\in \mathbb{F}_q[z]^{k\times n}$ with $G_\mu$ full rank and with left prime parity-check matrix $H(z)=\sum_{i=0}^{\nu}H_iz^i\in \mathbb{F}_q[z]^{(n-k)\times n}$ where $H_{\nu}$ has full rank. Then, $\nu=\mu$ and all non-trivial full-size minors of 
$$\mathcal{G}^c_{j+\mu}=\begin{pmatrix}
        G_{\mu} & & \\
                \vdots & \ddots & \\
                       G_0 &  & G_\mu \\
        & \ddots   & \vdots  \\
        &  & G_0       
    \end{pmatrix}\in\mathbb F^{k(j+1+2\mu)\times n(j+1+\mu)}$$
are nonzero if and only if all non-trivial full-size minors of $$
    \mathcal{H}_{j}=\left(\begin{array}{cccccc}
     H_\nu   & \dots &   H_0 &      &        &\\
            &  H_\nu   & \dots &   H_0 &        &\\
            &       & \ddots &       & \ddots &\\
            &       &        &  H_\nu   & \dots &   H_0\\
    \end{array}\right)\in\mathbb F^{(j+1)(n-k)\times (j+1+\nu)n}
    $$
are nonzero.
\end{theorem}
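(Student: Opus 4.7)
The plan is to reduce the equivalence to a single application of Lemma~\ref{Lem:ComplementaryMinors} with $A = \mathcal{G}^c_{j+\mu}$ and $B = \mathcal{H}_j$, and then to match non-trivial minors under column complementation via Lemma~\ref{Lemma:GjcompHj}. First I would establish $\nu = \mu$. Because $G_\mu$ has full row rank $k$, every row of $G(z)$ must have degree exactly $\mu$ (otherwise that row would contribute a zero row to $G_\mu$), so the sum of the row degrees is $k\mu$; since $G$ is row reduced, this sum equals $\delta$, giving $\mu = \delta/k$. The identical reasoning applied to $H(z)$ (its leading coefficient $H_\nu$ being full row rank forces $H$ to be row reduced; combined with left primeness, $H$ is a basic minimal parity-check matrix whose row degrees sum to $\delta$) yields $(n-k)\nu = \delta$, and hence $\nu = \delta/(n-k) = \delta/k = \mu$, using $n = 2k$.

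Next I would verify the orthogonality $\mathcal{G}^c_{j+\mu} \cdot \mathcal{H}_j^\top = \mathbf{0}$ directly from $G(z) H(z)^\top = 0$. Writing the product block-wise, the $((r,s),(q,t))$-entry unfolds to the $(s,t)$-entry of $\sum_c G_{\mu-r+c} H_{\mu-c+q}^\top$; after substituting $a = \mu - r + c$, this becomes the $(2\mu + q - r)$-th coefficient of $G(z) H(z)^\top$, which vanishes. Using $n - k = k$ and $\nu = \mu$, the dimensions match Lemma~\ref{Lem:ComplementaryMinors}: with $m = k(j+1+2\mu)$ and $\ell = k(j+1)$, the common column count is $m + \ell = 2k(j+1+\mu) = n(j+1+\mu)$. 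The remaining hypothesis is full row rank of both matrices. For $\mathcal{H}_j$ this is standard given $H$ left prime with $H_\nu$ full row rank. For $\mathcal{G}^c_{j+\mu}$ I would exploit its two-sided Sylvester structure: the top $\mu$ row blocks restricted to the leftmost $\mu$ column blocks form a block-lower-triangular matrix with $G_\mu$ on the diagonal (rank $\mu k$), and symmetrically the bottom $\mu$ row blocks restricted to the rightmost $\mu$ column blocks form a block-upper-triangular matrix with $G_0$ on the diagonal (also rank $\mu k$; here $G_0$ is full row rank because $G$ being left prime is delay-free); these two groups of rows sit in disjoint column ranges, and the central row blocks contribute the remaining $(j+1)k$ independent rows via the staircase of $G_\mu$ blocks along the main diagonal.

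Once both full row rank and orthogonality are in hand, Lemma~\ref{Lem:ComplementaryMinors} says that each full-size minor of $\mathcal{G}^c_{j+\mu}$ taken on a column set $\{\ell_1, \dots, \ell_m\}$ equals, up to a nonzero constant, the full-size minor of $\mathcal{H}_j$ taken on the complementary column set $\{1, \dots, m+\ell\} \setminus \{\ell_1, \dots, \ell_m\}$. Lemma~\ref{Lemma:GjcompHj} states precisely that the non-trivial column sets of $\mathcal{G}^c_{j+\mu}$ (conditions~(i),(ii) of Theorem~\ref{Prop:NotTriviallyZeroMinorsG}) are the complements of the non-trivial column sets of $\mathcal{H}_j$ (conditions~(i),(ii) of Lemma~\ref{lemma:nottriviallyzeroH}). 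Combining these two facts, \emph{all non-trivial minors of $\mathcal{G}^c_{j+\mu}$ are nonzero} is equivalent to \emph{all non-trivial minors of $\mathcal{H}_j$ are nonzero}, which is the desired conclusion. I expect the main obstacle to be the full-row-rank verification of $\mathcal{G}^c_{j+\mu}$, since the interaction between the corner blocks and the middle row blocks needs careful bookkeeping; the remainder is a clean combination of the two lemmas already established in the paper.
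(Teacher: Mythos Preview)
Your proposal is correct and follows essentially the same route as the paper: establish $\nu=\mu$ from the row-reducedness conditions, derive $\mathcal{H}_j\,(\mathcal{G}^c_{j+\mu})^\top=0$ from $H(z)G(z)^\top=0$, check the dimension identity $(n-k)(j+1)+k(j+1+2\mu)=n(j+1+\mu)$, and then combine Lemma~\ref{Lem:ComplementaryMinors} with Lemma~\ref{Lemma:GjcompHj}. The only difference is that you explicitly address the full-row-rank hypothesis of Lemma~\ref{Lem:ComplementaryMinors}, which the paper leaves implicit; your sketch for $\mathcal{G}^c_{j+\mu}$ is on the right track, though a cleaner argument is to observe that any non-trivial set of $(n,k,j+\mu,\mu)$-generator indices exists (e.g.\ take the first $k$ columns in each of the $j+1+\mu$ blocks together with $\mu k$ further columns respecting condition~(ii)), and the corresponding minor is block-triangular with invertible diagonal blocks built from $G_\mu$ and $G_0$.
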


\begin{proof}

As $G_{\mu}$ is full rank, all $k$ row degrees of $G(z)$ are equal to $\mu$ and $\delta=k\mu$.
Since $H(z)$ is left prime and $H_{\nu}$ has full rank, we obtain $\delta=(n-k)\nu$, we can conclude that $\nu=\mu$ because $n=2k$.

Due to the fact that $H(z)G(z)^\top =0$, one has
    $$
   \left(\begin{array}{ccccc}
     H_\nu   & \dots &   H_0 &      &        \\
                          & \ddots &       & \ddots &\\
                  &        &  H_\nu   & \dots &   H_0
    \end{array}\right)\cdot 
     \left(\begin{array}{ccccc}
     G_\mu^\top   & \dots &   G_0^\top &      &        \\
                          & \ddots &       & \ddots &\\
                  &        &  G_\mu^\top   & \dots &   G_0^\top 
    \end{array}\right)=0
    $$
Moreover, one has $(n-k)(j+1)+k(j+1+2\mu)=n(j+1)+2k\mu=n(j+1+\mu)$, i.e. the number of rows of $\mathcal{H}_j$ plus the number of columns of $(\mathcal{G}^{c}_{j+\mu})^\top$ is equal to the number of columns of $\mathcal{H}_j$ which equals the number of rows of $(\mathcal{G}^{c}_{j+\mu})^\top$. 

Hence, by Lemma \ref{Lem:ComplementaryMinors} a full-size minor of $\mathcal{H}_j$ is nonzero if and only if the complementary full-size minor of $\mathcal G^c_{j+\mu}$ is nonzero and by Lemma \ref{Lemma:GjcompHj} we have that the non-trivial full-size minors of $\mathcal{G}^c_{j+\mu}$ are the complementary minors of the non-trivial full-size minor of $\mathcal{H}_j$.
\end{proof}

In the following we present a general construction for complete MDP convolutional codes via the generator matrix. This shows that these codes exist for all $(n,k,\delta)$ with $k\mid\delta$ if the size of the finite field is sufficiently large.
The construction uses a lemma from \cite{dr16} and is similar to the one presented in \cite{J:CompleteMDP}, where the existence of complete MDP convolutional with $(n-k)\mid\delta$ was shown.

\begin{definition}\cite{dr16}
    Let $S_n$ be the symmetric group of order $n$. The determinant of an $n\times n$ matrix $A=[a_{i,j}]$ is given by $\operatorname{det}(A)=\sum_{\sigma\in S_n}(-1)^{\operatorname{sgn}(\sigma)}a_{1,\sigma(1)}\cdots a_{n,\sigma(n)}$. We call a product of the form $a_{1,\sigma(1)}\cdots a_{n,\sigma(n)}$ with $\sigma\in S_n$ a trivial term of the determinant if at least one component $a_{i,\sigma(i)}$ is equal to zero.
\end{definition}



\begin{lemma}\cite[Theorem 3.3]{dr16}\ \\
Let $\alpha$ be a primitive element of a finite field $\mathbb F=\mathbb F_{p^N}$ and $B=[b_{i,l}]$ be a matrix over $\mathbb F$ with the following properties
\begin{enumerate}
\item if $b_{i,l}\neq 0$, then $b_{i,l}=\alpha^{\beta_{i,l}}$ for a positive integer $\beta_{i,l}$
\item if $b_{i,l}=0$, then $b_{i',l}=0$ for any $i'>i$ or $b_{i,l'}=0$ for any $l'<l$
\item if $l<l'$, $b_{i,l}\neq 0$ and $b_{i,l'}\neq 0$, then $2\beta_{i,l}\leq\beta_{i,l'}$
\item if $i<i'$, $b_{i,l}\neq 0$ and $b_{i',l}\neq 0$, then $2\beta_{i,l}\leq\beta_{i',l}$.
\end{enumerate}
Suppose that $N$ is greater than any exponent of $\alpha$ appearing as a nontrivial term of any minor of $B$. Then $B$ has the property that each of its minors which is non-trivial is nonzero.
\end{lemma}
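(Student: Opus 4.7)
The plan is to combine one algebraic fact with one combinatorial fact. The algebraic fact is that, since $\alpha$ is a primitive element of $\mathbb F_{p^N}$, its minimal polynomial over $\mathbb F_p$ has degree $N$, so $\{1,\alpha,\alpha^2,\ldots,\alpha^{N-1}\}$ is an $\mathbb F_p$-basis of $\mathbb F_{p^N}$. Consequently, any $\mathbb F_p$-linear combination $\sum_k c_k \alpha^{e_k}$ with the $e_k$ pairwise distinct, $0 \le e_k < N$, and at least one coefficient $c_k$ not a multiple of $p$ is nonzero. Fix a non-trivial minor $M$ of $B$; its determinant equals $\sum_{\sigma} \operatorname{sgn}(\sigma)\,\alpha^{S_\sigma}$, summed over those permutations $\sigma$ for which the product is a non-trivial term (so $S_\sigma$ is the corresponding exponent sum), and by hypothesis every $S_\sigma < N$. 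Hence, by the algebraic fact, it suffices to prove that the $S_\sigma$ are pairwise distinct across non-trivial terms of $M$.

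The combinatorial fact to prove is therefore: if $\sigma\ne\tau$ are two permutations giving non-trivial terms of $M$, then $S_\sigma\ne S_\tau$. My plan for this is a ``rightmost-column-of-disagreement'' argument. Let $l^*$ be the largest column index of $M$ with $\sigma^{-1}(l^*)\ne\tau^{-1}(l^*)$; say $i_\sigma=\sigma^{-1}(l^*)$ and $i_\tau=\tau^{-1}(l^*)$, with $i_\sigma>i_\tau$ after relabeling. For every column $l>l^*$ the two permutations assign the same row, so those exponents cancel in $S_\sigma-S_\tau$. In rows $i_\sigma$ and $i_\tau$ the two permutations assign columns strictly less than $l^*$ (other than the one assignment $l^*$ each), and conditions (3)-(4) give the crucial inequalities $\beta_{i_\sigma,l^*}\ge 2\beta_{i_\tau,l^*}$, $\beta_{i_\sigma,l^*}\ge 2\beta_{i_\sigma,\tau(i_\sigma)}$, and $\beta_{i_\tau,l^*}\ge 2\beta_{i_\tau,\sigma(i_\tau)}$. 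To handle the remaining rows where $\sigma$ and $\tau$ still disagree, I would iterate the same procedure (peeling off the next ``rightmost disagreement'' among the not-yet-handled rows) and thereby reduce to a 2-by-2 case, where the estimate $2\max(a,b)\ge a+b$, combined with the doubling bound $\beta_{i_\sigma,l^*}\ge 2\max(\beta_{i_\sigma,\tau(i_\sigma)},\beta_{i_\tau,l^*})$, yields a strict inequality $S_\sigma>S_\tau$.

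Equivalently, one may formulate this as an induction on the size $n$ of the submatrix. The base case $n=1$ is trivial. For the induction step, single out the entry in the submatrix whose exponent is maximum; conditions (3) and (4) force this entry to be weakly larger than every other entry in its row and column, and by iteration it is strictly larger than the sum of any compatible choice of entries that avoid it. Expanding the determinant along that entry's row (or column) separates a ``leading'' cofactor, to which the induction hypothesis applies, from all other contributions that have strictly smaller total exponent; this again ensures uniqueness of the largest exponent sum and, inductively applied to cofactors, of every exponent sum.

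The main obstacle is the combinatorial step: making the ``rightmost disagreement'' bookkeeping rigorous for permutations that differ in many positions, and checking that the partial-order structure imposed by (3)-(4), together with the staircase hypothesis (2) that guarantees no additional zero-induced coincidences, really forces strict separation of all $S_\sigma$. Once the distinctness of the $S_\sigma$ is established, the hypothesis $N>\max_\sigma S_\sigma$ together with $\operatorname{sgn}(\sigma)\in\{+1,-1\}\subset\mathbb F_p$ and the $\mathbb F_p$-basis property of $\{1,\alpha,\ldots,\alpha^{N-1}\}$ immediately gives $\det(M)\ne 0$.
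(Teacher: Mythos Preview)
The paper does not prove this lemma; it is quoted from \cite[Theorem~3.3]{dr16} and used as a black box for the construction that follows. There is therefore no proof in the paper to compare your attempt against.

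Your outline is nonetheless the standard route to this result and is sound. The reduction to ``all exponent sums $S_\sigma$ over non-trivial terms are pairwise distinct'' is exactly the right target, and once that is established the $\mathbb F_p$-linear independence of $1,\alpha,\dots,\alpha^{N-1}$ (with the signs $\pm 1\in\mathbb F_p^\times$) finishes the argument immediately. Of your two sketches for the combinatorial step, the second (induction on the size of the submatrix via the entry of maximal exponent) is closest to the original argument in \cite{dr16}; the doubling conditions (3)--(4) indeed guarantee that the largest exponent appearing in the chosen submatrix strictly exceeds the sum of the remaining exponents along any other non-trivial term, which isolates a unique leading term and lets the induction go through. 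Your ``rightmost disagreement'' variant is morally the same idea but, as you note, the bookkeeping is heavier because rows below the disagreement row can carry entries not directly comparable to $\beta_{i_\sigma,l^*}$ via a single application of (3) or (4); the clean way around this is to pick not the rightmost column but the entry of globally maximal exponent among those on which $\sigma$ and $\tau$ disagree, and then observe that maximality forces both the row and the column of that entry to be extremal among the disagreement set. One also has to invoke condition~(2) to ensure that the auxiliary permutations produced in the induction still yield non-trivial terms. These are genuine details to fill in, but your plan contains the right ideas.
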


\begin{theorem}
    Let $n,k,\delta\in\mathbb{N}$ with $k<n$ and $k\mid\delta$ and let $\alpha$ be a primitive element of a finite field $\mathbb{F}=\mathbb{F}_{p^N}$ with $N>k(L+1+2\mu)\cdot 2^{(\mu+1)n+k-2}$. Then $G(z)=\sum_{i=0}^{\mu}G_iz^i$ with
    \begin{equation*}
        G_i=\left[\begin{array}{ccc}
            \alpha^{2^{in}} & \cdots & \alpha^{2^{(i+1)n-1}}  \\
             \vdots &             & \vdots\\
             \alpha^{2^{in+k-1}} & \cdots & \alpha^{2^{(i+1)n+k-2}}
        \end{array}\right]
    \end{equation*}
    for $i=0,1,\dots,\mu=\frac{\delta}{k}$ is the generator matrix of an $(n,k,\delta)$ complete MDP convolutional code.
\end{theorem}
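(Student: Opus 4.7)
The plan is to apply the cited lemma from \cite{dr16} to the matrix $\mathcal{G}^c_{L+\mu}$ (after a suitable row permutation) in order to conclude that all its non-trivial full-size minors are nonzero; by Definition \ref{Def:Complete-j-MDP-New}(b), this is exactly what it means for $\mathcal{C}$ to be complete MDP. Since permuting rows only changes the sign of any minor, one may freely pass between $\mathcal{G}^c_{L+\mu}$ and a row-reversed version $\widetilde{\mathcal G}$ without affecting the set of vanishing non-trivial minors.

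First I would verify conditions (1), (3), and (4) of the lemma. Every nonzero entry of $\mathcal{G}^c_{L+\mu}$ has the form $\alpha^{2^m}$ with $m\in\mathbb{N}_0$, giving condition (1). The entry in row $m$, column $l$ of a block $G_i$ equals $\alpha^{2^{in+m+l-2}}$. Moving one column to the right doubles the exponent; and the transition from the last column of $G_i$ to the first column of $G_{i+1}$ within the same block-row of $\mathcal{G}^c_{L+\mu}$ is also a doubling, since $2\cdot 2^{(i+1)n+m-2}=2^{(i+1)n+m-1}$ equals the exponent of the first column of $G_{i+1}$ in that row. This gives condition (3). An analogous calculation yields condition (4), provided the block-rows are listed in the order that makes exponents increase downwards (with $G_0$ on top of the first block column), which is achieved by the row reversal mentioned above.

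The delicate point — and the step I expect to be the main obstacle — is condition (2). After the reversal, the zero pattern of $\widetilde{\mathcal G}$ becomes ``diamond-shaped'': there is a triangular zero region in the upper-left corner and another in the lower-right, separated by a nonzero band of block-width $\mu+1$ that shifts one block-column leftwards per block-row. For a zero at position $(i,l)$ lying in the upper-left region, all entries of row $i$ to the left of $l$ are also zero, because the nonzero band in row $i$ begins to the right of column $l$; the second alternative of condition (2) thus holds. Symmetrically, for a zero lying in the lower-right region, every entry of column $l$ below row $i$ is zero, and the first alternative holds. This case analysis is tedious but routine once the index ranges are written out in terms of the block coordinates.

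Finally, the largest exponent of $\alpha$ appearing in $\mathcal{G}^c_{L+\mu}$ is $2^{(\mu+1)n+k-2}$, attained at the bottom-right entry of $G_\mu$. Any non-trivial term in the expansion of a full-size minor is a product of $k(L+1+2\mu)$ entries, one per row, so its total exponent is at most $k(L+1+2\mu)\cdot 2^{(\mu+1)n+k-2}$. The assumed bound on $N$ therefore meets the final hypothesis of the lemma, and invoking the lemma yields that every non-trivial full-size minor of $\widetilde{\mathcal G}$, and hence of $\mathcal{G}^c_{L+\mu}$, is nonzero. By Definition \ref{Def:Complete-j-MDP-New}(b), $\mathcal{C}$ is a complete MDP convolutional code.
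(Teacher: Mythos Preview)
Your argument is correct and follows essentially the same strategy as the paper: permute $\mathcal{G}^c_{L+\mu}$ so that the lemma from \cite{dr16} applies, then bound the largest exponent appearing in any nontrivial Leibniz term. The paper's text speaks of reversing the \emph{blocks of columns}, but the matrix it actually displays (with $G_0$ in the top-right and $G_\mu$ in the bottom-left, zero triangles in the upper-left and lower-right) is exactly the block-\emph{row} reversal you describe; your choice is the one that makes conditions (3) and (4) transparent, since along each row and each column the exponents then genuinely at least double. Your verification of condition (2) via the two triangular zero regions, and your bound $k(L+1+2\mu)\cdot 2^{(\mu+1)n+k-2}$ on the exponent of any nontrivial term, match the paper's. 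In short, same proof, with your write-up supplying the details the paper leaves implicit.
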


\begin{proof}
  We have to show that each non-trivial full-size minor of $\mathcal{G}^c_{\mu+j}$ is nonzero. Permutation (reverse ordering) of the blocks of columns of $\mathcal{G}^c_{\mu+L}$, leads to the matrix $\left[\begin{array}{ccccc}
0 &  & G_0  \\ 
 & \text{\reflectbox{$\ddots$}} & \vdots  \\ 
G_0 &  & G_{\mu} \\
\vdots & \text{\reflectbox{$\ddots$}}\\
G_{\mu} & & 0
\end{array}\right]\in\mathbb F^{k(L+1+2\mu)\times n(L+1+\mu)}$, which fulfills the  conditions of the preceding lemma if
$N$ is larger than any exponent of $\alpha$ appearing as a nontrivial term of any minor of this matrix.
The largest possible value for such an exponent is upper bounded by $k(L+1+2\mu)\cdot 2^{(\mu+1)n+k-2}$. 
\end{proof}

\begin{example}
    In this example we consider a $(3,1,1)$ code with generator matrix $G(z)=\sum_{i=0}^{\mu}G_iz^i$ defined by 
    $$
    G_0=\left[\begin{array}{ccc} \alpha & \alpha^2 & \alpha^4 \end{array}\right];\quad G_1=\left[\begin{array}{ccc} \alpha^8 & \alpha^{16} & \alpha^{32} \end{array}\right].
    $$
    One gets a complete MDP code with $k\mid\delta$ and $(n-k)\nmid\delta$ over the field $\mathbb{F}_{p^N}$ if $N>6\cdot 2^5$.
\end{example}

Note that the extension we propose in the definition of complete $j$-MDP convolutional codes enlarges this family of codes. The previous construction shows that new kinds of these codes can be obtained in a very similar way as with the previous definition.
The construction of (complete) MDP convolutional codes over small finite fields is a challenging problem, no matter whether they are defined via the parity-check matrix or via the generator matrix.

\section{Performance Analysis}\label{sec_complexity}

In this section, we calculate the complexity, i.e. the number of necessary operations in the finite field $\mathbb F_q$, of our erasure decoding procedure using the generator matrix of the convolutional code and compare it
with the complexity of the decoding method introduced in \cite{TRS:Decod}, which uses the parity-check matrix of the convolutional code. The complexity of both mentioned ways of erasure decoding is determined by the complexity of solving linear systems of equations. Therefore, we use that according to \cite{complexity} the complexity of solving a linear system with $m$ equations and $x$ unknowns is $O(m^{0.8}\cdot x^2)$.

For the forward decoding method 
presented in the Section \ref{sec_decoding} using the generator matrix, it is needed to solve
the equation system (\ref{smallcomplete}), where the unknowns are the symbols of the messages.
Hence, in each window of size $(j+1)n$ containing $e$ erasures we have to solve an equation system with $(j+1)k$ unknowns and $(j+1)n-e$ equations, which results in a decoding complexity of $O(((j+1)n-e)^{0.8}((j+1)k)^2)$. When using forward decoding with the parity-check matrix, the complexity for this is $O(((j+1)(n-k))^{0,8}e^2)$ and we see that using the generator matrix has a better performance when the amount of erasures is large. Using MDP convolutional codes, we can correct up to $e=(n-k)(j+1)$ erasures in such a window and obtain complexities of $O(((j+1)k)^{2.8})$ using the generator matrix and $O(((j+1)(n-k))^{2.8})$ using the parity-check matrix. Hence, decoding with the generator matrix is more efficient if and only if $k<n-k$, which is the same parameter range for which it also has a better guard space recovering rate; see Table \ref{tab:RcoveringRatio}.

In summary, without the need for guard spaces, the same amount of erasures can be corrected with generator matrix and parity-check matrix and what is more efficient depends on whether $k$ or $n-k$ is smaller.

For the calculation of guard spaces the computational complexities are in both cases $O(((n(\mu+j+1))^{0.8}(k(2\mu+j+1))^2)$. 
Nonetheless, if $k$ is smaller than $n-k$, we can tolerate more erasures when using the generator matrix, while when $n-k$ is smaller, we can tolerate more erasures with the parity-check matrix. 
However, there are situations when $n-k<k$ and it is still better to use the generator matrix instead of the parity-check matrix.
First of all, this is obviously true if the code is catastrophic and we do not have a parity-check matrix. Secondly, note that using the generator matrix, we immediately recover the messages $u_i$, while using the parity-check matrix, we only recover $v(z)$. If $v(z)$ can be recovered, it is easy to calculate all coefficients of $u(z)$ from the coefficients of $v(z)$. However, if at some point too many erasures occurred such that some symbols have to be declared as lost when using the parity-check matrix to compute a new guard space, we recover $\bfv_t,\dots\bfv_{t+j}$, but we still have to get the corresponding messages $\bfu_{t-\mu},\dots,\bfu_{t+j}$. To do so, one needs to solve the system 
\begin{equation}\label{eq:SolveG}
(\bfu_{t-\mu},\dots,\bfu_{t+j})\mathcal{G}^c_{j}=(\bfv_t,\dots\bfv_{t+j}),   
\end{equation}
where the unknowns are the messages we want to recover. The complexity for this is $\mathcal{O}(((j+\mu+1)k)^2((j+1)n)^{0.8})$, which is neglectable with respect to the recovery of the erasures.

\section{Conclusions}
We provided an erasure decoding algorithm for arbitrary not necessarily non-catastrophic convolutional codes using only the generator matrix of the code. To obtain codes very well suited for this algorithm, we define a new class of complete MDP convolutional codes. Evaluating the performance of the new algorithm and codes we see that for $k<n-k$ they perform better than previous algorithms using the parity-check matrix. A future line of work is to investigate how the ideas of this paper can be generalized to convolutional codes over finite rings.

\section*{Acknowledgements}
The first author was supported by the German research foundation, project number 513811367. The second author was supported by CIDMA under the FCT
(Portuguese Foundation for Science and Technology) 
Multi-Annual Financing Program for R$\&$D Units. The third author is supported by the Grundlagenforschungsfond (GFF) of the University of St.Gallen, project no.\ 2260780.

\bibliographystyle{abbrv} 
\bibliography{arxivbib}
\end{document}